\DeclareMathAlphabet{\mathbbold}{U}{bbold}{m}{n}
\newcommand{\ra}[1]{\renewcommand{\arraystretch}{#1}}
\newcommand{\smtiny}[1]{{\scalebox{.63}{#1}}}
\newcommand{\stiny}[1]{{\scalebox{.5}{#1}}}
\newcommand*{\SumOp}{\operatornamewithlimits{\text{\scalebox{1.25}{$\sum$}}}\limits}
\newcommand*{\limOp}{\operatornamewithlimits{lim}\limits}
\newcommand{\tr}{{\smtiny{$\mathsf{T}$ }}\!}
\newcommand{\zero}{\mathbf{0}}
\newcommand{\eye}{\mathbb{I}}
\newcommand{\eye}{\mathbf{I}}
\newcommand{\vc}[1]{{ \mathrm{#1} }}
\newcommand{\mx}[1]{{ \mathrm{#1} }}
\newcommand{\drm}{\mathrm{d}}
\newcommand{\inner}[2]{{ \langle {#1,#2} \rangle}}
\newcommand{\ellone}{\ell^{1}}
\newcommand{\ellinfty}{\ell^{\infty}}
\newcommand{\Lone}{L^{1}}
\newcommand{\Linfty}{L^{\infty}}
\newcommand{\Lscrone}{\Lscr^{1}}
\newcommand{\Lscrinfty}{\Lscr^{\infty}}
\newcommand{\Lscrp}{\Lscr^{p}}
\newcommand{\Dscr}{{\mathscr{D}}}
\newcommand{\Hscr}{{\mathscr{H}}}
\newcommand{\Lscr}{{\mathscr{L}}}
\newcommand{\Xscr}{{\mathscr{X}}}
\newcommand{\Jcal}{{\mathcal{J}}}
\newcommand{\Scal}{{\mathcal{S}}}
\newcommand{\Nbb}{{\mathbb{N}}}
\newcommand{\Rbb}{{\mathbb{R}}}
\newcommand{\Tbb}{{\mathbb{T}}}
\newcommand{\Zbb}{{\mathbb{Z}}}
\newcommand{\bbk}{\mathbbm{k}}
\newtheorem{theorem}{Theorem}
\newtheorem*{theorem*}{Theorem}
\newtheorem{definition}{Definition}
\newtheorem*{definition*}{Definition}
\newtheorem{corollary}[theorem]{Corollary}
\newtheorem{lemma}[theorem]{Lemma}
\newtheorem*{example*}{Example}
\newtheorem*{claim*}{Claim}
\newtheorem*{problem*}{Problem}
\newcommand{\utau}{\underline{\tau}}
\newcommand{\otau}{\overline{\tau}}
\newcommand{\tin}{t_i^{(n)}}
\newcommand{\tjm}{t_j^{(m)}}
\newcommand{\Iin}{\mathrm{I}_i^{(n)}}
\newcommand{\Iijnm}{\mathrm{I}_{i,j}^{(n,m)}}
\newcommand{\Jin}{\mathrm{J}_i^{(n)}}
\newcommand{\Jjm}{\mathrm{J}_j^{(m)}}
\newcommand{\nD}{{n_{\stiny{$\!\Dscr$}}}}
\newcommand{\Lu}[1]{\mx{L}^{\!\vc{u}}_{#1}}
\newcommand{\gS}{\vc{g}^{\stiny{$(\Scal)$}}}
\newcommand{\gtS}{{g}^{\stiny{$(\Scal)$}}}
\newcommand{\gstar}{\vc{g}^{\star}}
\newcommand{\vcf}{\vc{f}}
\newcommand{\vcg}{\vc{g}}
\newcommand{\vch}{\vc{h}}
\newcommand{\vcu}{\vc{u}}
\newcommand{\vcv}{\vc{v}}
\newcommand{\vcy}{\vc{y}}
\newcommand{\mxO}{\mx{O}}
\newcommand{\vcc}{\vc{c}}
\newcommand{\kernel}{\bbk}
\newcommand{\Hk}{\Hscr_{\bbk}}
\newcommand{\Dtau}{\Delta\tau}
\newcommand{\phiu}[1]{\varphi_{#1}^{\text{\rm{(u)}}}}
\newcommand{\vcvn}{\vc{v}^{(n)}}
\newcommand{\vn}{v^{(n)}}
\newcommand{\vm}{v^{(m)}}
\newcommand{\an}{a^{(n)}}
\newcommand{\am}{a^{(m)}}
\title{\textbf{\LARGE The Existence and Uniqueness of Solutions for Kernel-Based System Identification}}
\author[$\dagger$]{Mohammad Khosravi}%
\author[$\dagger$]{Roy S. Smith}
\affil[$\dagger$]{Automatic Control Laboratory, ETH Z\"urich 
	\authorcr
	\texttt{\{khosravm,rsmith\}@control.ee.ethz.ch}
}
\begin{document}
	\maketitle                           
\begin{abstract}
The notion of reproducing kernel Hilbert space (RKHS) has emerged in system identification during the past decade. In the resulting framework, the impulse response estimation problem is formulated as a regularized optimization defined on an infinite-dimensional RKHS consisting of stable impulse responses. The consequent estimation problem is well-defined under the central assumption that the convolution operators restricted to the RKHS are continuous linear functionals. Moreover, according to this assumption, the representer theorem hold, and therefore, the impulse response can be estimated by solving a finite-dimensional program. Thus, the continuity feature plays a significant role in kernel-based system identification. This paper shows that this central assumption is guaranteed to be satisfied in considerably general situations, namely when the kernel is an integrable function and the input signal is bounded. Furthermore, the strong convexity of the optimization problem and the continuity property of the convolution operators imply that the kernel-based system identification admits a unique solution. Consequently, it follows that kernel-based system identification is a well-defined approach.
\end{abstract}

\section{Introduction}\label{sec:introduction} \vspace{0mm}
System identification, the theory of generating suitable abstract representations for dynamical systems based on measurement data, is a well-established research field \cite{zadeh1956identification}. Due to the importance of mathematical models in various areas of science and technology, system identification is an active research area with numerous developed methodologies \cite{ljung2010perspectives,LjungBooK2, schoukens2019nonlinear,khosravi2021ROA,ahmadi2020learning,khosravi2021grad}.
On the other hand, the concept of reproducing kernel Hilbert space (RKHS), initially introduced in \cite{aronszajn1950theory}, has emerged in statistics, signal processing and numerical analysis \cite{parzen1959statistical,parzen1961approach,kailath1971rkhs,wahba1990spline,cucker2002best,berlinet2011reproducing}, and provided a solid foundation for estimation and interpolation problems. The inherent features of RKHSs, such as their fundamental relation to the positive semi-definite kernels and the Gaussian process \cite{kimeldorf1970correspondence,lukic2001stochastic,kanagawa2018gaussian}, led to establishing various methodologies and opened numerous avenues of research in statistical learning theory  \cite{cuckerANDsmale2002mathematical}.  \vspace{0mm}

In the seminal work of Pillonetto and De Nicolao \cite{pillonetto2010new}, the \emph{kernel-based identification} methods are introduced by bringing the theory of RKHSs to the area of linear system identification, which led to a paradigm shift in the field \cite{ljung2020shift}.
The kernel-based method unifies the identification theory of continuous-time systems and discrete-time systems, described either with a finite or an infinite impulse response, by formulating the identification problem as a regularized regression defined on a RKHS of stable systems, where the regularization term is specified based on the norm of employed RKHS \cite{pillonetto2014kernel}. The resulting formulation addresses issues of  model order selection, robustness, and bias-variance trade-off \cite{pillonetto2014kernel,chiuso2019system,khosravi2021robust}. 
The cornerstone of a RKHS is the associated kernel function, which highlights the necessity of designing suitable kernels for system identification \cite{dinuzzo2015kernels}. The most frequently used kernels in the literature are {tuned/correlated} (TC),  {diagonal/correlated} (DC), {stable spline} (SS), and their generalizations \cite{zorzi2021second,chen2018continuous,andersen2020smoothing}. 
Other forms of kernels and regularization matrices have been proposed, inspired by machine learning, system theory, harmonic analysis of stochastic processes, and filter design methods \cite{chen2014system,zorzi2018harmonic,marconato2016filter}.
While in the classical identification methods, the complexity of models is described by the orders of system, which are integer variables determined based on metrics such as Akaike information criterion \cite{LjungBooK2}, the model complexity in kernel-based approach is specified and regulated by the hyperparameters characterizing the kernel and the regularization weight, which are continuous variables to be tuned \cite{ljung2020shift}.
The estimation of hyperparameters can be performed using powerful and robust methods such as empirical Bayes, Stein unbiased risk estimator, and cross-validation \cite{pillonetto2015tuning,mu2018asymptotic,mu2018asymptotic-GCV,mu2021asymptotic}. 
Moreover, the kernel-based scheme allows the incorporation of various forms of side-information in the identification problem by designing appropriate kernel functions or imposing suitable constraints to the regression problem.
The forms of this side-information, studied to date, include  stability, relative degree,  smoothness of the impulse response,  resonant frequencies,  external positivity, oscillatory behaviors, steady-state gain, internal  positivity, exponential decay of the impulse response, structural properties, internal low-complexity, frequency domain features, and the presence of fast and slow poles \cite{fujimoto2017extension,marconato2016filter,chen2012estimation,zheng2021bayesian,pillonetto2016AtomicNuclearKernel,khosravi2020low,darwish2018quest,chen2018kernel,khosravi2019positive,prando2017maximum,fujimoto2018kernel,risuleo2019bayesian,everitt2018empirical,risuleo2017nonparametric,khosravi2020regularized,khosravi2021POS,khosravi2021SSG,khosravi2021FDI}.
While kernel-based system identification has enjoyed considerable progress in the past decade, it is still a thriving area of research with state-of-the-art results and recent studies \cite{scandella2021kernel,pillonetto2019stable,scandella2020note,bisiacco2020mathematical,pillonetto2021sample,bisiacco2020kernel}. For example, the mathematical foundation of stable RKHSs is revisited in  \cite{bisiacco2020mathematical}, the sample complexity and the minimax properties of kernel-based methods are discussed in \cite{pillonetto2021sample}, and a long-standing question on the absolute summability of stable kernels is addressed in \cite{bisiacco2020kernel}.  \vspace{0mm}

The above-mentioned advantages of kernel-based methods stand on the assumption that the formulated regression problem is well-defined, i.e., the corresponding regularized optimization problem admits at least one solution. The base of this assumption is the continuity of convolution operators when they are restricted to the stable RKHS \cite{pillonetto2014kernel,dinuzzo2015kernels}. Accordingly, one may ask about the  conditions under which the continuity property holds.  This paper shows that this central assumption is satisfied in certain but highly general situations, namely when the input signal is bounded and the kernel is integrable. As a result,  kernel-based system identification admits a unique solution according to the continuity of convolution operators and the strong convexity of the optimization problem, which also implies that the kernel-based approach is well-defined.  \vspace{0mm}

\section{Notation and Preliminaries} 
The set of natural numbers, the set of non-negative integers, the set of real numbers, the set of non-negative real numbers, and the $n$-dimensional Euclidean space are denoted by $\Nbb$, $\Zbb_+$,  $\Rbb$, $\Rbb_+$, and $\Rbb^n$, respectively.
Throughout the paper, $\Tbb$ denotes either $\Zbb_+$ or $\Rbb_+$, and $\Tbb_{\pm}$ is defined as the set of scalars $t$ where $t\in\Tbb$ or $-t\in\Tbb$.  
The identity matrix/operator and the zero vector are denoted by $\eye$ and $\zero$, respectively.
Given measurable space $\Xscr$, we denote by $\Rbb^{\Xscr}$ as the space of measurable functions $\vcv:\Xscr\to \Rbb$.
The element $\vcv\in\Rbb^{\Xscr}$ is shown entry-wise as $\vcv=(v_x)_{x\in\Xscr}$, or $\vcv=(v(x))_{x\in\Xscr}$.
Depending on the context of discussion, $\Lscrinfty$ refers either to $\ellinfty(\Zbb)$ or $\Linfty(\Rbb)$. 
Similarly, $\Lscrone$ is either $\ellone(\Zbb_+)$ or $\Lone(\Rbb_+)$. 
For $p\in\{1,\infty\}$, the norm in $\Lscrp$ is denoted by $\|\cdot\|_{p}$.
With respect to each $\vcu=(u_s)_{s\in \Tbb_{\pm}}\in \Lscrinfty$ and $t\in\Tbb_{\pm}$, the linear operator $\Lu{t}:\Lscrone\to\Rbb$ is defined as
$\Lu{t}(\vcg) := \sum_{s\in \Zbb_+}g_s u_{t-s}$, when $\Tbb=\Zbb_+$, and 
$\Lu{t}(\vcg) := \int_{\Rbb_+}\!\!g_s u_{t-s}\drm s$, when $\Tbb=\Zbb_+$.

\section{Kernel-Based System Identification} \label{sec:problem_statement}   
Consider a stable LTI system $\Scal$ characterized by an impulse response  $\gS:=(\gtS_t)_{t\in\Tbb}\in\Rbb^{\Tbb}$, 
where $\Tbb$ is $\Zbb_+$ or $\Rbb_+$ respectively for the case that the system is discrete-time or continuous-time.
Suppose the system $\Scal$ is actuated by a signal  $\vcu\in\Lscrinfty$, and the resulting output signal is measured with measurement noise at 
$\nD$ time instants $t_1,\ldots,t_{\nD}$.
Let the measured output of the system at time instant $t_i$, and the corresponding measurement uncertainty, be denoted by $y_{t_i}$ and $w_{t_i}$, respectively. Due to the definition of operators $\{\Lu{t}|t\in\Tbb_{\pm}\}$, we know that \vspace{0mm}
\begin{equation}\label{eqn:output_sys_S} \vspace{0mm}
	y_{t_i} = \Lu{t_i}(\gS)+w_{t_i}, \qquad i=1,\ldots,\nD.
\end{equation}
Therefore, we are provided with a set of input-output measurement data denoted by $\Dscr$.
Accordingly, the impulse response identification problem is formalized as estimating $\gS$, the impulse response of stable system $\Scal$, based on the measurement data. In the kernel-based  identification framework, this problem is formulated as an impulse response estimation in a reproducing kernel Hilbert space (RKHS) endowed with a stable kernel. 
To introduce the main result of this paper, we need to discuss this paradigm briefly. \vspace{0mm}
\begin{definition}[\cite{berlinet2011reproducing}] \vspace{0mm}
	\label{def:kernel_and_section}
	Consider symmetric function $\kernel:\Tbb\times\Tbb\to \Rbb$, that is assumed to be continuous if $\Tbb=\Rbb_+$. We say $\kernel$ is a {\em Mercer kernel} when we have \vspace{0mm}
	\begin{equation} \vspace{0mm}
		\sum_{i=1}^{n}\sum_{j=1}^{n}
		a_i\kernel(t_i,t_j)a_j\ge 0,
	\end{equation} 
	for all $n\in\Nbb$, $t_1,\ldots,t_n\in\Tbb$, and $a_1,\ldots,a_n\in\Rbb$.
	Furthermore, with respect to each $t\in\Tbb$, the {\em section} of kernel $\kernel$ at $t$ is the function $\kernel_t:\Tbb\to\Rbb$ defined as $\kernel_{t}(\cdot)=\kernel(t,\cdot)$.
\end{definition} \vspace{0mm}
\begin{theorem}[\cite{berlinet2011reproducing}]\label{thm:kernel_to_RKHS_def} \vspace{0mm}
	With respect to each Mercer kernel $\kernel:\Tbb\times\Tbb\to \Rbb$, a unique Hilbert space $\Hk\subseteq \Rbb^{\Tbb}$ endowed with inner product $\inner{\cdot}{\cdot}_{\Hk}$ exists such that, for each $t\in\Tbb$, one has\\
	\emph{i)} $ \kernel_t\in\Hk$, and\\
	\emph{ii)} $\inner{\vc{g}}{ \kernel_{t}}_{\Hk}=g_t$, for all $\vc{g}=(g_s)_{s\in\Tbb}\in\Hk$.\\
	In this case, we say $\Hk$ is the \emph{RKHS with kernel} $\kernel$. Moreover, the second feature is called the {\em reproducing property}.
\end{theorem}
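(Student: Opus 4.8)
The plan is to carry out the classical Moore--Aronszajn construction, realizing $\Hk$ as the completion of the linear span of kernel sections. First I would set $\Hscr_0 := \linspan\{\kernel_t : t\in\Tbb\}$, the space of finite linear combinations, and on generic elements $f=\sum_i a_i\kernel_{s_i}$ and $h=\sum_j b_j\kernel_{t_j}$ define the bilinear form
\[
\inner{f}{h}_{\Hscr_0} := \sum_{i,j} a_i b_j\, \kernel(s_i,t_j).
\]
The first point to verify is that this is independent of the representation of each element. This follows from the identity $\inner{f}{\kernel_t}_{\Hscr_0} = \sum_i a_i\kernel(s_i,t) = f(t)$, which exhibits the form, when paired against a section, as ordinary point evaluation, a representation-free quantity. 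The Mercer property of Definition~\ref{def:kernel_and_section} then gives $\inner{f}{f}_{\Hscr_0}\ge 0$ directly, so the form is a positive semidefinite symmetric bilinear form.

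The next step is to upgrade this to a genuine inner product. Since the form is positive semidefinite, Cauchy--Schwarz applies, and combined with the reproducing identity it yields $|f(t)|^2 = |\inner{f}{\kernel_t}_{\Hscr_0}|^2 \le \inner{f}{f}_{\Hscr_0}\,\kernel(t,t)$ for every $t\in\Tbb$. Hence $\inner{f}{f}_{\Hscr_0}=0$ forces $f$ to vanish at every point, i.e. $f$ is the zero function, so the form is strictly positive definite on $\Hscr_0$. I would then let $\Hk$ be the abstract Hilbert-space completion of $(\Hscr_0,\inner{\cdot}{\cdot}_{\Hscr_0})$.

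The main obstacle is to realize this abstract completion as an honest subspace of $\Rbb^{\Tbb}$ rather than a space of equivalence classes of Cauchy sequences. The same inequality does the work: for any Cauchy sequence $(f_n)$ in $\Hscr_0$ one has $|f_n(t)-f_m(t)| \le \|f_n-f_m\|_{\Hscr_0}\sqrt{\kernel(t,t)}$, so $(f_n)$ converges pointwise to some element of $\Rbb^{\Tbb}$. I would map each abstract limit to this pointwise limit and check the assignment is well defined and injective, the key subpoint being that a sequence converging to $0$ in norm converges pointwise to the zero function, which is exactly the positive-definiteness established above. Extending the reproducing identity by continuity of the inner product then gives $\inner{\vc{g}}{\kernel_t}_{\Hk}=g_t$ for every $\vc{g}=(g_s)_{s\in\Tbb}\in\Hk$ and $t\in\Tbb$, which yields both (i) and (ii).

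For uniqueness, suppose $\Hscr$ is any Hilbert space of functions on $\Tbb$ satisfying (i)--(ii) for the same kernel. These properties force $\inner{\kernel_s}{\kernel_t}_{\Hscr}=\kernel(s,t)$, so the inner product of $\Hscr$ restricted to $\Hscr_0$ coincides with $\inner{\cdot}{\cdot}_{\Hscr_0}$; moreover, any $\vc{g}\in\Hscr$ orthogonal to every $\kernel_t$ satisfies $g_t=\inner{\vc{g}}{\kernel_t}_{\Hscr}=0$, hence is the zero function, so $\Hscr_0$ is dense in $\Hscr$. Thus $\Hscr$ and $\Hk$ are both completions of the same pre-Hilbert space realized through the same pointwise limits, and the canonical isometry between them is the identity on functions, giving $\Hscr=\Hk$ with matching inner products. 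I expect the pointwise-realization step and this density argument to be the only places requiring genuine care; the continuity hypothesis on $\kernel$ when $\Tbb=\Rbb_+$ enters only in guaranteeing that the sections, and hence the elements of $\Hscr_0$, are themselves well-behaved, and is not otherwise essential to the construction.
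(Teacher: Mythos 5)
The paper does not actually prove this theorem: it is quoted from \cite{berlinet2011reproducing} as the Moore--Aronszajn theorem, and your construction is the standard one from that literature, so there is no alternative route in the paper to compare against. Your setup of $\Hscr_0$, the well-definedness of the bilinear form via evaluation against sections, positive-definiteness via Cauchy--Schwarz, and the uniqueness argument via density of $\Hscr_0$ in any candidate $\Hscr$ are all correct as written, as is your remark that continuity of $\kernel$ is not needed for this particular theorem.

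There is, however, one genuine gap, and it sits exactly at the step you single out as the main obstacle. For the pointwise realization of the completion you need two separate facts: (a) a norm-null sequence in $\Hscr_0$ is pointwise null, which gives well-definedness of the map from abstract limits to functions; and (b) a Cauchy sequence in $\Hscr_0$ that is pointwise null is norm-null, which gives injectivity. You supply only (a), via $|f_n(t)|\le\|f_n\|_{\Hscr_0}\sqrt{\kernel(t,t)}$, and you assert that the injectivity subpoint is ``exactly the positive-definiteness established above.'' It is not: positive-definiteness concerns a single element with $\inner{f}{f}_{\Hscr_0}=0$, and the pointwise bound only runs in direction (a). Without (b), distinct elements of the abstract completion could be sent to the same function, and the induced inner product on the image would not be well defined, so $\Hk$ would fail to be a Hilbert space of functions on $\Tbb$. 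The standard repair: for $(f_n)$ Cauchy and pointwise null, write $\|f_n\|_{\Hscr_0}^2=\inner{f_n}{f_n-f_m}_{\Hscr_0}+\inner{f_n}{f_m}_{\Hscr_0}$; the first term is at most $\bigl(\sup_k\|f_k\|_{\Hscr_0}\bigr)\|f_n-f_m\|_{\Hscr_0}$, which is small for $n,m$ large by the Cauchy property, while for fixed $f_n=\sum_i a_i\kernel_{s_i}$ the second term equals $\sum_i a_i f_m(s_i)\to 0$ as $m\to\infty$ because the sum is finite and $f_m\to 0$ pointwise. Hence $\|f_n\|_{\Hscr_0}\to 0$. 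With this inserted, the argument is complete.
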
 \vspace{0mm}
Due to Theorem~\ref{thm:kernel_to_RKHS_def}, one can see that each RKHS is uniquely characterized by the corresponding Mercer kernel.
Since the to-be-estimated impulse response is known to be stable in the bounded-input-bounded-output (BIBO) sense, the employed kernel $\kernel$ is required to guarantee that $\Hk\subseteq\Lscr^1$.
The sufficient and necessary condition for this property is established by the following theorem.  \vspace{0mm}
\begin{theorem}[\cite{chen2018stability,carmeli2006vector}]
	Consider the Mercer kernel $\kernel:\Tbb\times\Tbb\to \Rbb$ and the corresponding RKHS $\Hk$. Then, $\Hk\subseteq\Lscr^1$  if and only if, for any $\vc{u}=(u_s)_{s\in\Tbb}\in\Lscr^{\infty}$,  
	one has \vspace{0mm}
	\begin{equation} \vspace{0mm}
		\sum_{t\in\Zbb_+}\bigg|\sum_{s\in\Zbb_+}u_s\kernel(t,s)\bigg|<\infty,
	\end{equation} 
	when $\Tbb=\Zbb_+$, and, \vspace{0mm}
	\begin{equation} \vspace{0mm}
		\int_{\Rbb_+}\bigg|\int_{\Rbb_+}u_s\kernel(t,s)\drm s\bigg|\drm t<\infty,
	\end{equation} 
	when $\Tbb=\Rbb_+$.
	When this property holds, kernel $\kernel$ is called \emph{stable} and $\Hk$ is said to be a \emph{stable RKHS}.
\end{theorem}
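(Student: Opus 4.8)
The plan is to recast the set-theoretic inclusion $\Hk\subseteq\Lscrone$ as the \emph{boundedness} of a naturally associated operator, and then to read the stated summability/integrability condition off a duality (adjoint) computation. I will carry this out for the discrete case $\Tbb=\Zbb_+$; the case $\Tbb=\Rbb_+$ is identical after replacing sums by integrals, using the assumed continuity of $\kernel$ to guarantee the requisite measurability. The single preliminary fact used throughout is that norm convergence in $\Hk$ and norm convergence in $\Lscrone$ each force pointwise convergence: the former because $|g_t|=|\inner{\vcg}{\kernel_t}_{\Hk}|\le\norm{\vcg}_{\Hk}\sqrt{\kernel(t,t)}$, the latter trivially. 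Hence any everywhere-defined linear map between these spaces whose values are computed coordinatewise has a closed graph, so the closed graph theorem upgrades mere everywhere-definedness into boundedness.

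For the direction ($\Rightarrow$): assuming $\Hk\subseteq\Lscrone$, the inclusion $J\colon\Hk\to\Lscrone$ has a closed graph by the remark above, hence is bounded. Its adjoint $J^{*}\colon\ellinfty=(\ellone)^{*}\to\Hk$ is characterized by $\inner{J^{*}\vcu}{\vcg}_{\Hk}=\sum_{t}u_t g_t$ for all $\vcg\in\Hk$; taking $\vcg=\kernel_s$ and invoking the reproducing property gives $(J^{*}\vcu)(s)=\sum_t u_t\kernel(t,s)$. Thus $J^{*}\vcu$ is precisely the function $s\mapsto\sum_t u_t\kernel(t,s)$, and since $J^{*}\vcu\in\Hk\subseteq\Lscrone$ we conclude $\sum_s|\sum_t u_t\kernel(t,s)|<\infty$ for every $\vcu\in\ellinfty$, which is the claimed condition.

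For the direction ($\Leftarrow$): assume the summability condition holds for every $\vcu\in\ellinfty$. This says exactly that the operator $C\colon\ellinfty\to\ellone$, $(C\vcu)_t=\sum_s u_s\kernel(t,s)$, is everywhere defined; moreover $\kernel(t,\cdot)\in\ellone$ for each $t$ (take $u_s=\mathrm{sign}\,\kernel(t,s)$, so the $t$-th inner sum becomes $\sum_s|\kernel(t,s)|$, which the finite nonnegative outer sum forces to be finite), and this makes the graph of $C$ closed, so $C$ is bounded. The crux is then one norm identity: for finitely supported $\vcu$ the finite combination $T_{\vcu}:=\sum_t u_t\kernel_t\in\Hk$ satisfies $\norm{T_{\vcu}}_{\Hk}^{2}=\sum_{t,s}u_t u_s\kernel(t,s)=\sum_s u_s (C\vcu)_s\le\norm{\vcu}_{\infty}\norm{C\vcu}_{1}\le\norm{C}\,\norm{\vcu}_{\infty}^{2}$. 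Given $\vcg\in\Hk$ and a finite set $F$, choosing $u_t=\mathrm{sign}(g_t)$ on $F$ and $0$ elsewhere gives $\sum_{t\in F}|g_t|=\inner{\vcg}{T_{\vcu}}_{\Hk}\le\norm{\vcg}_{\Hk}\norm{T_{\vcu}}_{\Hk}\le\sqrt{\norm{C}}\,\norm{\vcg}_{\Hk}$; letting $F$ exhaust $\Zbb_+$ yields $\vcg\in\Lscrone$, so $\Hk\subseteq\Lscrone$.

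I expect the main obstacle to be this last direction, specifically the step that converts the $\Lscrone$-summability hypothesis into a \emph{uniform} $\Hk$-norm bound on $\{T_{\vcu}:\norm{\vcu}_{\infty}\le1\}$: the closed graph theorem supplies finiteness of $\norm{C}$, and the identity $\norm{T_{\vcu}}_{\Hk}^{2}=\inner{\vcu}{C\vcu}$ is exactly what couples the $\Lscr$-topology to the $\Hk$-topology. Restricting to finitely supported $\vcu$ is deliberate, since it sidesteps any question of convergence of $\sum_t u_t\kernel_t$ in $\Hk$. In the continuous-time case the only genuine extra work is to interpret $T_{\vcu}=\int_{\Rbb_+} u_s\kernel_s\,\drm s$ as a weak $\Hk$-valued integral and to justify, via continuity of $\kernel$ and Fubini, the integral analogues of the reproducing-property interchanges.
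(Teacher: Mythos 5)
The paper does not prove this statement: it is imported verbatim from the cited references \cite{chen2018stability,carmeli2006vector}, so there is no in-paper proof to compare against. Judged on its own, your argument is correct in the discrete-time case and is essentially the standard duality proof from that literature. The forward direction (closed graph for the inclusion $J:\Hk\to\Lscrone$, then reading the condition off $J^{*}\vcu$ evaluated against kernel sections) is clean; note only that the absolute convergence of the inner sum $\sum_t u_t\kernel(t,s)$ is itself supplied by the hypothesis $\kernel_s\in\Hk\subseteq\Lscrone$, which you use implicitly. The reverse direction is the more delicate one and you handle it correctly: the sign-function trick to get $\kernel(t,\cdot)\in\Lscrone$ (which presupposes, reasonably, that the stated condition is read as ``the iterated expression is well defined and finite for every $\vcu$''), the closed graph theorem for $C:\Lscrinfty\to\Lscrone$, and the identity $\norm{T_{\vcu}}_{\Hk}^{2}=\sum_t u_t(C\vcu)_t\le\norm{C}\norm{\vcu}_{\infty}^{2}$ for finitely supported $\vcu$, followed by exhaustion. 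The one place where your write-up is genuinely only a sketch is the continuous-time case: there, $T_{\vcu}$ is no longer a finite linear combination of kernel sections, and both its membership in $\Hk$ and the identity $\norm{T_{\vcu}}_{\Hk}^{2}=\int\!\!\int u_t u_s\kernel(t,s)\,\drm s\,\drm t$ require the kind of Cauchy-sequence construction that the paper itself carries out in Lemma~\ref{lem:int_kernel_I} and Theorem~\ref{thm:Lu_bounded}; you acknowledge this, and indeed you could close that gap by invoking Lemma~\ref{lem:int_kernel_I} directly for step functions and passing to the limit, but as written that portion is an outline rather than a proof.
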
 \vspace{0mm}
Given the stable kernel $\kernel$ and the measurement data, the \emph{kernel-based impulse response estimation} problem is formulated as  \vspace{0mm}
\begin{equation}\label{eqn:kernel_based_IR_identification} \vspace{0mm}
	\min_{\vcg\in\Hk}\ \sum_{i=1}^{\nD}(\Lu{t_i}(\vcg)-y_i)^2 + \lambda \|\vcg\|_{\Hk}^2,
\end{equation}
where $\lambda>0$ is the regularization weight.
Based on the same arguments as in \cite[Theorem 1.3.1]{wahba1990spline}, one can describe the solution of \eqref{eqn:kernel_based_IR_identification} in terms of the sections of the kernel at $t_1,\ldots,t_{\nD}$. 
To this end, we need vector $\vcy$ defined as $\vcy=\big[y_{t_1},\ldots,y_{t_{\nD}}\big]^\tr\in\Rbb^{\nD}$, and the \emph{output kernel matrix} $\mxO\in\Rbb^{\nD\times\nD}$ formed from the input signal and defined entry-wise as  \vspace{0mm}
\begin{equation*} \vspace{0mm}
[\mxO]_{(i,j)}\!=\!\left\{
\!
\ra{1.8}
\begin{array}{ll}
	\int_{\Rbb_+}\!\int_{\Rbb_+}\!\kernel(s,t)u_{t_{i}-s}u_{t_{j}-t}\!\ \drm s \!\ \drm t, &\ \text{ if } \Tbb_+\!=\!\Rbb_+,\\
	\SumOp_{t\in\Zbb_+}\SumOp_{s\in\Zbb_+}\!\kernel(s,t)u_{t_{i}-s}u_{t_{j}-t},   &\ \text{ if } \Tbb_+\!=\!\Zbb_+,\\
\end{array}\right.
\end{equation*} 
for each $i,j=1,\ldots,\nD$. \vspace{0mm}
\begin{theorem}[Representer Theorem for System Identification, \cite{pillonetto2014kernel}]\label{thm:kernel_based_IR_identification}
Let $\Lu{t_i}:\Hk\to\Rbb$ be a continuous linear operator, for each $i=1,\ldots,\nD$. Then, the minimizer of \eqref{eqn:kernel_based_IR_identification} is $\gstar=(g_t^{\star})_{t\in\Tbb}\in\Hk$ defined as  \vspace{0mm}
\begin{equation}\label{eqn:solution_KRI} \vspace{0mm}
g_t^{\star} = \sum_{i=1}^{\nD}c_i^\star\Lu{t_i}(\kernel_t),\qquad \forall t\in\Tbb,
\end{equation}
where the vector $\vcc^\star=\big[c_1^\star,\ldots,c_{\nD}^\star\big]^\tr\in\Rbb^{\nD}$ is  
\begin{equation} \vspace{0mm}
	\vcc^\star = \big(\mxO+\lambda\eye_{\nD}\big)^{-1}\vcy,
\end{equation}
and, $\eye_{\nD}$ denotes  identity matrix of dimension $\nD$.
\end{theorem}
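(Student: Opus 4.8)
The plan is to recognize the statement as an instance of the classical representer theorem, adapted so that the data-fit term involves the convolution functionals $\Lu{t_i}$ rather than plain point evaluations. First I would invoke the continuity hypothesis: since each $\Lu{t_i}:\Hk\to\Rbb$ is a bounded linear functional on the Hilbert space $\Hk$, the Riesz representation theorem furnishes a unique $\eta_i\in\Hk$ with $\Lu{t_i}(\vcg)=\inner{\vcg}{\eta_i}_{\Hk}$ for all $\vcg\in\Hk$. The reproducing property then identifies $\eta_i$ explicitly: evaluating at $t$ and using $\kernel_t\in\Hk$ gives $(\eta_i)_t=\inner{\eta_i}{\kernel_t}_{\Hk}=\Lu{t_i}(\kernel_t)$, so that $\eta_i=(\Lu{t_i}(\kernel_t))_{t\in\Tbb}$. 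This already matches the claimed form of $\gstar$ once we show the minimizer is a linear combination of the $\eta_i$.

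Next I would run the orthogonal-decomposition argument. Let $V=\linspan\{\eta_1,\ldots,\eta_{\nD}\}$ and write any $\vcg\in\Hk$ as $\vcg=\vcg_V+\vcg_\perp$ with $\vcg_V\in V$ and $\vcg_\perp\in V^\perp$. Because $\Lu{t_i}(\vcg)=\inner{\vcg}{\eta_i}_{\Hk}=\inner{\vcg_V}{\eta_i}_{\Hk}=\Lu{t_i}(\vcg_V)$, the data-fitting term depends only on $\vcg_V$, whereas $\norm{\vcg}_{\Hk}^2=\norm{\vcg_V}_{\Hk}^2+\norm{\vcg_\perp}_{\Hk}^2\ge\norm{\vcg_V}_{\Hk}^2$, with equality if and only if $\vcg_\perp=0$. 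Hence replacing $\vcg$ by $\vcg_V$ never increases the objective and strictly decreases it unless $\vcg_\perp=0$; any minimizer therefore lies in $V$ and can be written $\gstar=\sum_{i=1}^{\nD}c_i^\star\eta_i$, i.e. $\gstart{t}=\sum_{i=1}^{\nD}c_i^\star\Lu{t_i}(\kernel_t)$.

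It remains to pin down $\vcc^\star$. Substituting $\gstar=\sum_j c_j\eta_j$ and using $\Lu{t_i}(\gstar)=\inner{\gstar}{\eta_i}_{\Hk}=\sum_j c_j\inner{\eta_i}{\eta_j}_{\Hk}$ together with $\norm{\gstar}_{\Hk}^2=\sum_{i,j}c_ic_j\inner{\eta_i}{\eta_j}_{\Hk}$ turns the problem into the finite-dimensional quadratic $\norm{\mxO\vcc-\vcy}^2+\lambda\,\vcc^\tr\mxO\vcc$, provided the Gram matrix $[\inner{\eta_i}{\eta_j}_{\Hk}]$ coincides with $\mxO$. This identification is the crux: from $\inner{\eta_i}{\eta_j}_{\Hk}=\Lu{t_j}(\eta_i)$ and the explicit form of $\eta_i$ one obtains the iterated sum $\sum_t\sum_s\kernel(t,s)u_{t_i-s}u_{t_j-t}$ (respectively the double integral), which equals $[\mxO]_{(i,j)}$ after using the symmetry $\kernel(t,s)=\kernel(s,t)$. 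Setting the gradient $2\mxO[(\mxO+\lambda\eye_{\nD})\vcc-\vcy]$ to zero, and noting that $\mxO\succeq\zeromx$ makes $\mxO+\lambda\eye_{\nD}$ invertible for $\lambda>0$, the choice $\vcc^\star=(\mxO+\lambda\eye_{\nD})^{-1}\vcy$ annihilates the bracket and hence solves the stationarity condition, yielding the stated minimizer.

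The main obstacle I anticipate is the Gram-matrix identification: it requires justifying the interchange of the bounded functional $\Lu{t_j}$ with the infinite sum or improper integral defining $\eta_i$, which is precisely where the stability/integrability of $\kernel$ and the boundedness $\vcu\in\Lscrinfty$ are genuinely used to invoke Fubini and absolute convergence. A secondary point worth care is that $\mxO$ may be singular, so the stationarity equation only forces $(\mxO+\lambda\eye_{\nD})\vcc-\vcy\in\ker\mxO$; it suffices to exhibit $\vcc^\star$ as one stationary point, since the induced $\gstar$ is the unique minimizer by the strong convexity of the objective.
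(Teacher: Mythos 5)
Your proof is correct and follows exactly the standard Riesz-representation plus orthogonal-decomposition argument of Wahba's Theorem 1.3.1, which is precisely what the paper invokes for this result (the paper itself does not reprove the theorem but cites \cite{pillonetto2014kernel} and the Wahba argument). Your closing caveats — the Gram-matrix identification $\inner{\eta_i}{\eta_j}_{\Hk}=[\mxO]_{(i,j)}$ and the possible singularity of $\mxO$ — are handled appropriately, since $\inner{\eta_i}{\eta_j}_{\Hk}=\Lu{t_j}(\eta_i)$ follows directly from the Riesz identity and $\eta_i\in\Hk\subseteq\Lscrone$, and exhibiting one stationary point of the reduced convex quadratic suffices given the strong convexity of the objective in $\vcg$.
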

The main assumption in Theorem~\ref{thm:kernel_based_IR_identification} is the continuity of convolution operators $\Lu{t_1},\ldots,\Lu{t_{\nD}}$, which depends mainly on the input signal $\vcu$ and kernel $\kernel$.
Accordingly, a natural question one may ask is \emph{under what conditions are the convolution operators continuous}.
Indeed, one should note that in Theorem~\ref{thm:kernel_based_IR_identification}, the convolution operators are restricted to $\Hk\subset\Lscrone$, and consequently, the continuity of $\Lu{t}:\Lscrone\to\Rbb$ does not imply that the restricted operator $\Lu{t}:\Hk\to\Rbb$ is continuous as well.
We address this continuity concern in the next section.

\section{Continuity of Convolution Operators}\label{sec:cont_Lu} 
The main result of this section is based on the notion of integrable kernels introduced below. 
\begin{definition}[\cite{pillonetto2014kernel}]\label{def:integrable_kernel}
	The Mercer kernel $\kernel:\Tbb\times\Tbb\to \Rbb$ is said to be \emph{integrable}
	if 
	\begin{equation} 
		\int_{\Rbb_+}\int_{\Rbb_+}|\kernel(s,t)|\ \! \drm s \drm t<\infty,
	\end{equation}
	when $\Tbb=\Rbb_+$, or, if  
	\begin{equation}\label{eqn:abs_summable} 
		\sum_{s\in\Zbb_+}
		\sum_{t\in\Zbb_+}
		|\kernel(s,t)|<\infty,
	\end{equation} 
	when $\Tbb=\Zbb_+$.
\end{definition} 
The integrable kernels are the largest known interesting sub-class of stable kernels in the context of kernel-based impulse response identification \cite{bisiacco2020kernel,bisiacco2020mathematical}. 
Before proceeding to the main theorem of this paper, we need to introduce additional lemmas. 
\begin{lemma}\label{lem:int_kernel_I}
	Let $\Tbb=\Rbb_+$ and kernel $\kernel$ be integrable. Consider $\utau$ and $\otau$ such that $0\le \utau<\otau\le \infty$.
	Then, $\int_{[\utau,\otau]}\kernel(\cdot,t)\drm t$ is a well-defined function and belongs to $\Hk$ for which we have 
	\begin{equation} 
		\Big\|\int_{[\utau,\otau]}\kernel(\cdot,t)\drm t\Big\|_{\Hk}^2 =
		\int_{\utau}^{\otau}\int_{\utau}^{\otau}\kernel(s,t)\drm s\drm t.
	\end{equation}
	Moreover, for each $\vcg=(g_t)_{t\in\Rbb_+}\in\Hk$, the following holds 
	\begin{equation}\label{eqn:int_g} 
		\int_{[\utau,\otau]}g_t\drm t =
		\Big\langle\int_{[\utau,\otau]}\kernel(\cdot,t)\drm t,\vcg\Big\rangle_{\Hk}.
	\end{equation}
\end{lemma}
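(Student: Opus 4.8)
The plan is to read $\int_{[\utau,\otau]}\kernel(\cdot,t)\drm t$ as a Bochner (vector-valued) integral of the section map $t\mapsto\kernel_t$ taking values in $\Hk$, and then to deduce all three assertions from the elementary fact that every bounded linear functional commutes with a Bochner integral. Throughout I use that, by Theorem~\ref{thm:kernel_to_RKHS_def}, each section $\kernel_t$ lies in $\Hk$ and satisfies the reproducing property $\inner{\vcg}{\kernel_t}_{\Hk}=g_t$.

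First I would show that $t\mapsto\kernel_t$ is norm-continuous from $\Rbb_+$ into $\Hk$. Expanding with the reproducing property gives $\norm{\kernel_t-\kernel_{t'}}_{\Hk}^2=\kernel(t,t)-2\kernel(t,t')+\kernel(t',t')$, which tends to $0$ as $t'\to t$ because a Mercer kernel on $\Tbb=\Rbb_+$ is continuous (Definition~\ref{def:kernel_and_section}). Continuity makes $t\mapsto\kernel_t$ strongly measurable and separably valued, and on any \emph{bounded} subinterval the norm $\norm{\kernel_t}_{\Hk}=\sqrt{\kernel(t,t)}$ is bounded; hence for every finite $T$ the Bochner integral $F_{\utau,T}:=\int_{[\utau,T]}\kernel_t\drm t$ exists in $\Hk$. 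Pulling the inner product inside the integral then yields, for finite $T$, the identities $\norm{F_{\utau,T}}_{\Hk}^2=\inner{\int_{[\utau,T]}\kernel_s\drm s}{\int_{[\utau,T]}\kernel_t\drm t}_{\Hk}=\int_{\utau}^{T}\!\int_{\utau}^{T}\kernel(s,t)\drm s\drm t$ and $\inner{F_{\utau,T}}{\vcg}_{\Hk}=\int_{[\utau,T]}\inner{\kernel_t}{\vcg}_{\Hk}\drm t=\int_{[\utau,T]}g_t\drm t$; moreover, evaluating at a point through the continuous functional $\vcg\mapsto g_s=\inner{\vcg}{\kernel_s}_{\Hk}$ gives $F_{\utau,T}(s)=\int_{[\utau,T]}\kernel(s,t)\drm t$, so the Hilbert-space element is exactly the claimed pointwise integral. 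This settles the case $\otau<\infty$.

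The remaining and main point is the improper case $\otau=\infty$, where I would define $\int_{[\utau,\infty)}\kernel_t\drm t$ as $\lim_{T\to\infty}F_{\utau,T}$ and must prove this limit exists in $\Hk$. Here the integrability hypothesis of Definition~\ref{def:integrable_kernel} does the work: for $T<T'$ the norm identity of the previous step gives $\norm{F_{\utau,T'}-F_{\utau,T}}_{\Hk}^2=\int\!\int_{[T,T']^2}\kernel(s,t)\drm s\drm t$, whose absolute value is bounded by $\int\!\int_{[T,\infty)^2}|\kernel(s,t)|\drm s\drm t$, and this tail tends to $0$ as $T\to\infty$ since $\int\!\int_{\Rbb_+^2}|\kernel|<\infty$. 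Thus $(F_{\utau,T})_T$ is Cauchy and converges to some $F\in\Hk$, which I take as the definition of $\int_{[\utau,\infty)}\kernel_t\drm t$. Passing the two finite-$T$ identities to the limit then finishes: the norm identity survives by continuity of the norm together with dominated convergence (by $|\kernel|$) on the right-hand double integral, and the reproducing-type identity survives because $\inner{F_{\utau,T}}{\vcg}_{\Hk}\to\inner{F}{\vcg}_{\Hk}$ while $\int_{[\utau,T]}g_t\drm t\to\int_{[\utau,\infty)}g_t\drm t$, the latter converging absolutely since integrable kernels are stable and hence $\vcg\in\Hk\subseteq\Lscr^1$.

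I expect the only genuine obstacle to be the $\otau=\infty$ case, i.e.\ justifying convergence of the improper integral; the decisive observation is that integrability controls the \emph{off-diagonal} double integral directly, so one never needs the (possibly false) integrability of the diagonal $t\mapsto\sqrt{\kernel(t,t)}$ over all of $\Rbb_+$. The finite-interval identities and the pointwise identification are then routine consequences of the commutation of bounded functionals with Bochner integrals.
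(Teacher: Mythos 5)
Your proof is correct, and for the finite-interval part it takes a genuinely different route from the paper. The paper constructs $\int_{[\utau,\otau]}\kernel(\cdot,t)\drm t$ by hand for $\otau<\infty$: it forms dyadic Riemann sums $\vcf_n=\Delta_n\sum_i\kernel(t_i^{(n)},\cdot)$, uses uniform continuity of $\kernel$ on the compact square to show $\inner{\vcf_n}{\vcf_m}_{\Hk}$ is uniformly close to the double integral, deduces that $\{\vcf_n\}$ is Cauchy, and identifies the limit pointwise; along the way it must also prove that every $\vcg\in\Hk$ is continuous so that its Riemann integral exists and equals $\lim_n\inner{\vcf_n}{\vcg}_{\Hk}$. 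You instead invoke the Bochner integral of the norm-continuous map $t\mapsto\kernel_t$ on bounded intervals and let the standard commutation of bounded functionals with vector-valued integrals deliver the norm identity, the pairing identity, and the pointwise identification in one stroke; this is shorter and cleaner at the cost of importing vector-integration machinery (strong measurability, separable range, Hille-type commutation) that the paper's elementary argument avoids. On the improper case $\otau=\infty$ the two proofs coincide: both define the object as the limit of the truncations, show the truncations are Cauchy because $\norm{F_{\utau,T'}-F_{\utau,T}}_{\Hk}^2=\int_T^{T'}\!\int_T^{T'}\kernel(s,t)\drm s\drm t$ is dominated by the vanishing tail of $\int\!\int|\kernel|$, and pass the identities to the limit by dominated convergence together with $\Hk\subseteq\Lscr^1$. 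Your closing remark that integrability is needed only for the off-diagonal tail, never for $\int\sqrt{\kernel(t,t)}\drm t$, is exactly the right observation and is implicitly what forces both proofs to treat $\otau=\infty$ as a limit of bounded truncations. One small point worth making explicit in your write-up: in the limit case you should also record that $F(s)=\lim_{T\to\infty}\int_{\utau}^{T}\kernel(s,t)\drm t=\int_{\utau}^{\infty}\kernel(s,t)\drm t$ for each $s$, which follows from continuity of the evaluation functional plus $\kernel_s\in\Hk\subseteq\Lscr^1$; the ingredients are all present in your argument, but the pointwise identification of the limit is part of the claim that the integral is a well-defined function in $\Hk$.
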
 
\begin{proof} 
	See Appendix \ref{apn:proof:lem:int_kernel_I}.
\end{proof} 
From this Lemma, we have the following corollary. 
\begin{corollary}
	\label{cor:inner_int_I1_int_I2}
	Let $\Tbb=\Rbb_+$ and kernel $\kernel$ be integrable. Consider $\utau_1$, $\utau_2$, $\otau_2$ and $\otau_2$ such that $0\le \utau_1<\otau_1\le \infty$ and $0\le \utau_2<\otau_2\le \infty$.
	Then, we have 
	\begin{equation}\label{eqn:inner_int_I1_int_I2} 
		\begin{split}
			&\!\!\!
			\Big\langle\!\int_{[\utau_1,\otau_1]}
			\!\! \kernel(\cdot,t)\drm t,\int_{[\utau_2,\otau_2]}
			\!\!\kernel(\cdot,s)\drm s\Big\rangle_{\Hk}
			\\&
			=
			\int_{[\utau_1,\otau_1]\times[\utau_2,\otau_2]}	
			\kernel(t,s)\drm s \drm t
			\\&
			=
			\int_{\utau_1}^{\otau_1}\int_{\utau_2}^{\otau_2}
			\kernel(t,s)\drm s \drm t
			=
			\int_{\utau_2}^{\otau_2}\int_{\utau_1}^{\otau_1}
			\kernel(t,s)\drm t \drm s.
		\end{split}
	\end{equation}
\end{corollary}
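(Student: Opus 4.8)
The plan is to reduce the identity \eqref{eqn:inner_int_I1_int_I2} to the two statements already furnished by Lemma~\ref{lem:int_kernel_I} and then close the argument with Fubini--Tonelli. First I would abbreviate $\vch_k := \int_{[\utau_k,\otau_k]}\kernel(\cdot,t)\drm t$ for $k=1,2$. By the first part of Lemma~\ref{lem:int_kernel_I}, each $\vch_k$ is a well-defined element of $\Hk$, so the inner product $\langle\vch_1,\vch_2\rangle_{\Hk}$ appearing on the left-hand side of \eqref{eqn:inner_int_I1_int_I2} is meaningful and the corollary is not vacuous.

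The key step is to apply the reproducing-type identity \eqref{eqn:int_g} of Lemma~\ref{lem:int_kernel_I} on the interval $[\utau_1,\otau_1]$ with the choice $\vcg=\vch_2\in\Hk$. This immediately gives
\[
\Big\langle\int_{[\utau_1,\otau_1]}\kernel(\cdot,t)\drm t,\ \vch_2\Big\rangle_{\Hk}
=\int_{[\utau_1,\otau_1]}(h_2)_t\,\drm t,
\]
where $(h_2)_t$ is the value at $t$ of the function $\vch_2\in\Hk$. To evaluate this value I would combine the reproducing property of Theorem~\ref{thm:kernel_to_RKHS_def} with \eqref{eqn:int_g} applied on $[\utau_2,\otau_2]$ to the section $\vcg=\kernel_t$: since $(\kernel_t)_s=\kernel(t,s)$, this yields $(h_2)_t=\langle\vch_2,\kernel_t\rangle_{\Hk}=\int_{[\utau_2,\otau_2]}\kernel(t,s)\drm s$. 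Substituting back produces the iterated integral $\int_{\utau_1}^{\otau_1}\!\int_{\utau_2}^{\otau_2}\kernel(t,s)\drm s\drm t$, which is exactly the third expression in \eqref{eqn:inner_int_I1_int_I2}.

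It then remains to pass between this iterated integral, the integral over the product rectangle $[\utau_1,\otau_1]\times[\utau_2,\otau_2]$, and the iterated integral in the reversed order. This is precisely where the integrability hypothesis of Definition~\ref{def:integrable_kernel} enters: because $\int_{\Rbb_+}\int_{\Rbb_+}|\kernel(s,t)|\drm s\drm t<\infty$, the restriction of $\kernel$ to the rectangle is absolutely integrable, so Fubini--Tonelli guarantees that the rectangle integral is finite and coincides with both iterated integrals. I expect the only genuinely delicate point to be this measurability and integrability bookkeeping---confirming that $\vch_2$ really has the claimed pointwise values and that the rectangle integral is finite---whereas the algebraic content of the equalities follows directly from Lemma~\ref{lem:int_kernel_I}.
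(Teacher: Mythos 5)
Your proposal is correct and follows essentially the same route as the paper: apply the identity \eqref{eqn:int_g} of Lemma~\ref{lem:int_kernel_I} on $[\utau_1,\otau_1]$ with $\vcg=\int_{[\utau_2,\otau_2]}\kernel(\cdot,s)\drm s$, and then invoke Fubini together with the integrability of $\kernel$ to identify the iterated and rectangle integrals. The only (harmless) extra step is your explicit derivation of the pointwise values of $\vch_2$ via the reproducing property, which Lemma~\ref{lem:int_kernel_I} already supplies as part of its statement.
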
 
\begin{proof} 
	See Appendix \ref{apn:proof:cor:inner_int_I1_int_I2}.
\end{proof} 
The next lemma is the discrete-time version of Lemma~\ref{lem:int_kernel_I}. 
\begin{lemma}\label{lem:sum_kernel_I}
	Let $\Tbb=\Zbb_+$ and kernel $\kernel$ be integrable. Consider $\utau,\otau\in\Zbb_+$ such that $0\le \utau\le \otau\le \infty$.
	Then, $\sum_{\utau\le t\le\otau}\kernel(\cdot,t)$ is a well-defined function and belongs to $\Hk$ for which we have 
	\begin{equation}\label{eqn:sum_kt} 
		\Big\|\sum_{\utau\le t\le\otau}\kernel(\cdot,t)\Big\|_{\Hk}^2 =
		\sum_{\utau\le s,t\le\otau}\kernel(s,t).
	\end{equation}
	Moreover, for each $\vcg=(g_t)_{t\in\Zbb_+}\in\Hk$, the following holds 
	\begin{equation}\label{eqn:sum_gt} 
		\sum_{\utau\le t\le\otau}g_t =
		\Big\langle\sum_{\utau\le t\le\otau}\kernel(\cdot,t)\drm t,\vcg\Big\rangle_{\Hk}.
	\end{equation}
\end{lemma}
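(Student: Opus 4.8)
The plan is to mirror the argument of Lemma~\ref{lem:int_kernel_I}, replacing integrals by sums and exploiting the absolute summability \eqref{eqn:abs_summable} that integrability provides. First I would dispose of the finite case $\otau<\infty$: there $\sum_{\utau\le t\le\otau}\kernel_t$ is a genuine finite linear combination of sections, hence lies in $\Hk$ by Theorem~\ref{thm:kernel_to_RKHS_def}, and all three claims reduce to finite computations through the reproducing property. For the infinite case $\otau=\infty$, I would introduce the partial sums $h^{(N)}:=\sum_{\utau\le t\le N}\kernel_t\in\Hk$ for $N\ge\utau$. The crucial estimate uses bilinearity of the inner product together with the reproducing identity $\inner{\kernel_s}{\kernel_t}_{\Hk}=\kernel(s,t)$ (obtained by applying Theorem~\ref{thm:kernel_to_RKHS_def} to $\vcg=\kernel_s$): for $N>M\ge\utau$,
\[
\big\|h^{(N)}-h^{(M)}\big\|_{\Hk}^2=\sum_{M<s,t\le N}\kernel(s,t)\le\sum_{s>M}\sum_{t>M}|\kernel(s,t)|.
\]
Since \eqref{eqn:abs_summable} makes the full double series absolutely convergent, the right-hand side tends to $0$ as $M\to\infty$, so $(h^{(N)})_N$ is Cauchy in $\Hk$.

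Because $\Hk$ is complete, this sequence converges to some $h\in\Hk$, and the next step is to identify $h$ with the claimed pointwise series. For fixed $s$, continuity of the evaluation functional $\vcg\mapsto\inner{\vcg}{\kernel_s}_{\Hk}$ and the reproducing property give $h_s=\inner{h}{\kernel_s}_{\Hk}=\lim_{N}\inner{h^{(N)}}{\kernel_s}_{\Hk}=\lim_{N}\sum_{\utau\le t\le N}\kernel(s,t)=\sum_{\utau\le t\le\otau}\kernel(s,t)$, the last series converging absolutely because $\sum_{t}|\kernel(s,t)|<\infty$ by \eqref{eqn:abs_summable}. Hence $\sum_{\utau\le t\le\otau}\kernel(\cdot,t)$ is well-defined and coincides with $h\in\Hk$. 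The norm identity \eqref{eqn:sum_kt} then follows from continuity of the norm, $\|h\|_{\Hk}^2=\lim_{N}\|h^{(N)}\|_{\Hk}^2=\lim_{N}\sum_{\utau\le s,t\le N}\kernel(s,t)=\sum_{\utau\le s,t\le\otau}\kernel(s,t)$, where the finite norms were already evaluated above.

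Finally, for \eqref{eqn:sum_gt} I would pass to the limit in the first argument of the inner product: given $\vcg\in\Hk$, continuity of $\inner{\cdot}{\vcg}_{\Hk}$ yields $\inner{h}{\vcg}_{\Hk}=\lim_{N}\inner{h^{(N)}}{\vcg}_{\Hk}=\lim_{N}\sum_{\utau\le t\le N}\inner{\kernel_t}{\vcg}_{\Hk}=\lim_{N}\sum_{\utau\le t\le N}g_t=\sum_{\utau\le t\le\otau}g_t$, using $\inner{\kernel_t}{\vcg}_{\Hk}=g_t$ and the absolute convergence of $\sum_t g_t$, which holds since an integrable kernel is stable and hence $\Hk\subseteq\Lscrone$. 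The only genuinely delicate point is the Cauchy estimate: everything rests on controlling the partial-sum increments by the tail of the absolutely convergent double series \eqref{eqn:abs_summable}, which is exactly where \emph{integrability} rather than mere stability is needed; the remaining steps are routine transfers of Hilbert-space continuity through the limit.
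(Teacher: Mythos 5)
Your proposal is correct and follows essentially the same route as the paper's proof: finite case by the reproducing property, then for $\otau=\infty$ a Cauchy argument on the partial sums controlled by the tail of the absolutely summable double series, identification of the limit pointwise via the evaluation functionals, and passage to the limit in the norm and in $\inner{\cdot}{\vcg}_{\Hk}$. The only cosmetic difference is that the paper invokes dominated convergence to justify the final interchange of limit and double sum, whereas you appeal directly to absolute convergence; both are fine.
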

\begin{proof} 
	See Appendix \ref{apn:proof:lem:sum_kernel_I}.
\end{proof}
Based on Lemma \ref{lem:int_kernel_I}, Corollary \ref{cor:inner_int_I1_int_I2} and Lemma \ref{lem:sum_kernel_I}, we can present the main theorem of this paper. 
\begin{theorem}[Continuity of Convolution Operators]\label{thm:Lu_bounded} 
	Let $\kernel$ be an integrable Mercer kernel.
	Then, for any $\vc{u}\in\Lscrinfty$ and $\tau\in\Tbb$, the operator $\Lu{\tau}$ is continuous (bounded). 
	Moreover, 
	there exists $\phiu{\tau}=(\phiu{\tau,t})_{t\in\Tbb}\in\Hk$ such that $\Lu{\tau}(\vcg) = \inner{\phiu{\tau}}{\vcg}$, for any $\vcg\in\Hk$.
	Furthermore, for any $t\in\Tbb$, 
	we have  
	\begin{equation}\label{eqn:phiu_tau_t} 
		\phiu{\tau,t} =  \Lu{\tau}(\kernel_t) = 
		\begin{cases}
			\int_{\Rbb_+}\kernel(t,s)u_{\tau-s}\drm s,
			&
			\text{ if } \Tbb=\Rbb_+,
			\\
			\sum_{s\in\Zbb_+}\kernel(t,s)u_{\tau-s},
			&
			\text{ if } \Tbb=\Zbb_+.
		\end{cases}
	\end{equation}
\end{theorem}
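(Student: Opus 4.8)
The plan is to prove continuity by explicitly exhibiting the Riesz representer $\phiu{\tau}$: once we produce an element of $\Hk$ satisfying $\Lu{\tau}(\vcg)=\inner{\phiu{\tau}}{\vcg}_{\Hk}$ for every $\vcg\in\Hk$, boundedness is immediate from Cauchy--Schwarz, since $|\Lu{\tau}(\vcg)|\le\|\phiu{\tau}\|_{\Hk}\|\vcg\|_{\Hk}$. The natural candidate is the weighted superposition of kernel sections $\phiu{\tau}=\int_{\Rbb_+}u_{\tau-s}\kernel(\cdot,s)\drm s$ in continuous time, and $\phiu{\tau}=\sum_{s\in\Zbb_+}u_{\tau-s}\kernel(\cdot,s)$ in discrete time. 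Formally applying the reproducing property inside the integral gives $\inner{\phiu{\tau}}{\vcg}_{\Hk}=\int_{\Rbb_+}u_{\tau-s}\inner{\kernel_s}{\vcg}_{\Hk}\drm s=\int_{\Rbb_+}u_{\tau-s}g_s\drm s=\Lu{\tau}(\vcg)$, so the entire difficulty lies in making this heuristic rigorous, i.e., in showing the candidate is a genuine element of $\Hk$ and that the interchange of the inner product with the limiting operation is legitimate.

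For the continuous-time case I would approximate $s\mapsto u_{\tau-s}$ by simple functions $u^{(n)}=\sum_k a^{(n)}_k\mathbbm{1}_{I^{(n)}_k}$ that are uniformly bounded by $\|\vcu\|_{\infty}$ and converge pointwise a.e. For each $n$ the element $\phiu{\tau}^{(n)}:=\sum_k a^{(n)}_k\int_{I^{(n)}_k}\kernel(\cdot,s)\drm s$ lies in $\Hk$ by Lemma~\ref{lem:int_kernel_I}, and Corollary~\ref{cor:inner_int_I1_int_I2} yields the exact identity $\|\phiu{\tau}^{(n)}-\phiu{\tau}^{(m)}\|_{\Hk}^2=\int_{\Rbb_+}\int_{\Rbb_+}(u^{(n)}_t-u^{(m)}_t)(u^{(n)}_s-u^{(m)}_s)\kernel(t,s)\drm s\drm t$. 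Dominating the integrand using the uniform bound and invoking integrability of $\kernel$ via dominated convergence shows the sequence is Cauchy, hence convergent to some $\phiu{\tau}\in\Hk$ by completeness; the same computation yields the norm bound $\|\phiu{\tau}\|_{\Hk}^2=\int_{\Rbb_+}\int_{\Rbb_+}u_{\tau-t}u_{\tau-s}\kernel(t,s)\drm s\drm t\le\|\vcu\|_\infty^2\int_{\Rbb_+}\int_{\Rbb_+}|\kernel(t,s)|\drm s\drm t<\infty$. The discrete-time case is analogous but simpler: I would use the truncations $\phiu{\tau}^{(N)}=\sum_{s=0}^{N}u_{\tau-s}\kernel(\cdot,s)$, which are finite combinations of sections and hence trivially in $\Hk$, together with Lemma~\ref{lem:sum_kernel_I} and the absolute summability \eqref{eqn:abs_summable} to obtain the Cauchy property.

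To identify the limit with the operator, I would pass to the limit in the identity $\inner{\phiu{\tau}^{(n)}}{\vcg}_{\Hk}=\int_{\Rbb_+}u^{(n)}_s g_s\drm s$, which holds for each $n$ by linearity of the inner product and equation~\eqref{eqn:int_g} of Lemma~\ref{lem:int_kernel_I}. Continuity of the inner product handles the left-hand side, while the right-hand side converges to $\int_{\Rbb_+}u_{\tau-s}g_s\drm s=\Lu{\tau}(\vcg)$ by dominated convergence, with $\vcg\in\Hk\subseteq\Lscrone$ supplying the dominating envelope; note this also shows $|\Lu{\tau}(\vcg)|\le\|\vcu\|_\infty\|\vcg\|_1<\infty$, so the operator is well-defined on all of $\Hk$. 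This delivers $\Lu{\tau}(\vcg)=\inner{\phiu{\tau}}{\vcg}_{\Hk}$ and hence continuity. The expected main obstacle is exactly this pair of limiting arguments---guaranteeing $\Hk$-norm convergence of the simple-function approximation and the legitimacy of the inner-product/integral interchange---since the domain is infinite and the convergence is in an infinite-dimensional space rather than merely pointwise.

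Finally, the pointwise formula \eqref{eqn:phiu_tau_t} follows by specializing $\vcg=\kernel_t$: the reproducing property gives $\phiu{\tau,t}=\inner{\phiu{\tau}}{\kernel_t}_{\Hk}$, and the representation identity just established shows this equals $\Lu{\tau}(\kernel_t)$. Evaluating the latter from the definition of $\Lu{\tau}$, with $\kernel_t(s)=\kernel(t,s)$, produces $\int_{\Rbb_+}\kernel(t,s)u_{\tau-s}\drm s$ in continuous time and $\sum_{s\in\Zbb_+}\kernel(t,s)u_{\tau-s}$ in discrete time, as claimed.
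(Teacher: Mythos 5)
Your proposal is correct and follows essentially the same route as the paper's proof: approximate $s\mapsto u_{\tau-s}$ by uniformly bounded step functions (truncations in discrete time), use Lemma~\ref{lem:int_kernel_I}, Corollary~\ref{cor:inner_int_I1_int_I2} and Lemma~\ref{lem:sum_kernel_I} together with integrability of $\kernel$ and dominated convergence to show the resulting sequence in $\Hk$ is Cauchy, and then pass to the limit in the inner-product identity. The only cosmetic differences are that you bound $\|\vcf_n-\vcf_m\|_{\Hk}$ directly rather than through the paper's intermediate quantities $l$, $l_n$, $l_{n,m}$ and a triangle inequality, and that you obtain the pointwise formula \eqref{eqn:phiu_tau_t} by specializing $\vcg=\kernel_t$ in the representation identity instead of computing $f_t=\lim_{n\to\infty}f_{n,t}$ via dominated convergence; both variants are valid.
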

\begin{proof}
	We discuss the proof for the cases of $\Tbb=\Rbb_+$ and $\Tbb=\Zbb_+$.\\
	\textbf{Case I:} Let $\Tbb=\Rbb_+$ and
	define $\vcv=(v_s)_{s\in\Rbb_+}$ such that $v_s=u_{\tau-s}$, for any $s\in\Rbb_+$.
	Accordingly, we have 
	\begin{equation} 
		\Lu{\tau}(\vcg)
		=
		\int_{\Rbb_+}v_s g_s \drm s,
	\end{equation}
	for each $\vcg=(g_s)_{s\in\Rbb_+}$.
	Note that $\vcv\in\Lscrinfty$, and hence, in $\Lscrinfty$, there exists a sequence of step functions $\vcvn:=(\vn_s)_{s\in\Rbb_+}$, $n=1,2,\ldots,$  such that $\|\vcvn\|_{\infty}\le \|\vcv\|_{\infty}$, for each $n\in\Nbb$, and, $\lim_{n\to\infty}\vn_s=v_s$, for almost all $s\in\Rbb_+$ \cite{stein2009real}.
	For each $n\in\Nbb$, due to the definition of step functions \cite{stein2009real}, we know that there exists $M_n\in\Nbb$, intervals $\Jin\subseteq\Rbb_+$, $i=1,\ldots,M_n$, and $\an_i\in\Rbb$, $i=1,\ldots,M_n$, 
	such that 
	\begin{equation}\label{eqn:vn} 
		\vn_s = \sum_{i=1}^{M_n}\an_i\mathbf{1}_{\Jin}(s), \qquad \forall s\in\Rbb_+.
	\end{equation}
	For each $n\in\Nbb$, define $\vcf_n=(f_{n,t})_{t\in\Rbb_+}$ as  
	\begin{equation}\label{eqn:f_n} 
		\vcf_n := \int_{\Rbb_+}\!\!\!\!\vn_s\kernel(\cdot,s)\drm s
		=
		\int_{\Rbb_+}\sum_{i=1}^{M_n}\!\an_i\mathbf{1}_{\Jin}(s)\kernel(\cdot,s)\drm s,
	\end{equation}
	which is well-defined and belongs to $\Hk$ according to Lemma \ref{lem:int_kernel_I}.
	Accordingly, due to \eqref{eqn:vn}, for each $\vcg=(g_s)_{s\in\Rbb_+}\in\Hk$, we have 
	\begin{equation} 
		\label{eqn:int_gs_vns_inner_fn_g}
		\begin{split}
			\int_{\Rbb_+}& g_s\vn_s\drm s 
			=
			\sum_{i=1}^{M_n}\an_i\int_{\Jin}g_s\drm s 
			\\&=
			\sum_{i=1}^{M_n}\an_i\Big\langle\int_{\Jin}\kernel(\cdot,s)\drm s,\vcg\Big\rangle_{\Hk} 
			\\&=
			\Big\langle\sum_{i=1}^{M_n}\an_i\int_{\Jin}\kernel(\cdot,s)\drm s,\vcg\Big\rangle_{\Hk}
			\\&=
			\Big\langle\int_{\Rbb_+}\sum_{i=1}^{M_n}\an_i\mathbf{1}_{\Jin}(s)\kernel(\cdot,s)\drm s,\vcg\Big\rangle_{\Hk} 
			\\&=
			\Big\langle\int_{\Rbb_+}\vn_s\kernel(\cdot,s)\drm s,\vcg\Big\rangle_{\Hk} 
			\\&= 
			\langle\vcf_n,\vcg\rangle_{\Hk},
		\end{split}
	\end{equation}
	where the second equality is due to Lemma \ref{lem:int_kernel_I}.
	Let $\varepsilon$ be an arbitrary positive real scalar.
	Define $l$ as 
	\begin{equation} 
		l:=\int_{\Rbb_+\times \Rbb_+}v_t\kernel(t,s)v_s \drm t \drm s,
	\end{equation}
	and, for any $n\in\Nbb$, $l_n$ as  
	\begin{equation} 
		l_n:=\int_{\Rbb_+\times \Rbb_+}\vn_t\kernel(t,s) v_s \drm t \drm s.
	\end{equation} 
	For almost all $s,t\in\Rbb_+$, we have 
	\begin{equation} 
		\lim_{n\to\infty}\vn_t\kernel(t,s)v_s = 
		v_t\kernel(t,s)v_s. 
	\end{equation}
	Moreover, for any $n\in\Nbb$, we know that  
	\begin{equation} 
		\big|
		\vn_t\kernel(t,s)v_s
		\big|
		\le \|\vcv\|_{\infty} |\kernel(t,s)|.
	\end{equation}
	Since $\kernel$ is integrable, from the dominated convergence theorem~\cite{stein2009real}, we have 
	\begin{equation} 
		\begin{split}
			\lim_{n\to\infty}l_n 
			&=
			\lim_{n\to\infty} \int_{\Rbb_+\times\Rbb_+}
			\vn_t\kernel(t,s)v_s \drm s \drm t
			\\&=
			\int_{\Rbb_+\times\Rbb_+}
			\lim_{n\to\infty} 
			\vn_t\kernel(t,s)v_s \drm s \drm t
			\\&=
			\int_{\Rbb_+\times\Rbb_+}
			v_t\kernel(t,s)v_s \drm s \drm t = l.
		\end{split}
	\end{equation}
	Therefore, there exist $N_{\varepsilon}\in\Nbb$ such that $|l_n-l|\le \frac18\varepsilon^2$, for each $n\ge N_{\varepsilon}$.
	Define $l_{n,m}$ as
	$l_{n,m}:=\inner{\vcf_n}{\vcf_m}_{\Hk} $, for each $m,n\in\Nbb$.
	Accordingly, from \eqref{eqn:f_n}, Corollary \ref{cor:inner_int_I1_int_I2} and the linearity of integration and inner product, it follows that 
	\begin{equation*} 
		\begin{split}
			l_{n,m}
			\!\! \!\! \!\! \!\! \!\! \!\! 
			&
			\ \ \ \ \ \ 
			=
			\sum_{i=1}^{M_n}
			\sum_{j=1}^{M_m}
			\an_i\am_j
			\Big\langle \int_{\Jin}\!\!\kernel(\cdot,s)\drm s,\int_{\Jjm}\!\!\kernel(\cdot,t)\drm t\Big\rangle_{\Hk} 
			\\&=
			\sum_{i=1}^{M_n}
			\sum_{j=1}^{M_m}
			\an_i\am_j
			\int_{\Jin}\int_{\Jjm}\kernel(s,t)
			\drm s\drm t
			\\&=
			\int_{\Rbb_+\times \Rbb_+}
			\sum_{i=1}^{M_n}
			\sum_{j=1}^{M_m}
			\an_i\am_j
			\mathbf{1}_{\Jin}(t)
			\mathbf{1}_{\Jjm}(s)
			\kernel(s,t)
			\drm s\drm t.
		\end{split}
	\end{equation*}
	Therefore, due to the definition of  $\vcv_n$ and $\vcv_m$, we have   
	\begin{equation} 
		l_{n,m} = \inner{\vcf_n}{\vcf_m}_{\Hk} =
		\int_{\Rbb_+\times \Rbb_+}
		\vn_s\kernel(s,t)\vm_t
		\drm s\drm t.
	\end{equation}
	Accordingly, since $\|\vcv_n\|_{\infty},
	\|\vcv_m\|_{\infty}\le \|\vcv\|_{\infty}$, one can see that 
	\begin{equation}\label{eqn:|l_mn-l_n|} 
		\begin{split}\!\!\!\!\!\!
			|l_{n,m}-l_n| &\!=\! 
			\bigg| \int_{\Rbb_+\times\Rbb_+}\!\!
			\vn_t \Big[\kernel(t,s)(\vm_s\!-\!v_s)\Big]
			\drm t\drm s\Big|
			\\&\!\le\!
			\|\vcv\|_{\infty} 
			\int_{\Rbb_+\times\Rbb_+}\!\!
			\Big| \kernel(t,s)(\vm_s \!-\! v_s)\Big| 
			\drm t\drm s.
		\end{split}
	\end{equation}
	From $\|\vcv_n\|_{\infty},
	\|\vcv_m\|_{\infty}\le \|\vcv\|_{\infty}$,
	we have  
	\begin{equation} 
		|\kernel(t,s)(\vm_s-v_s)|
		\le 
		2\|\vcv\|_{\infty} |\kernel(t,s)|.
	\end{equation}
	Moreover, for almost all $s\in\Rbb_+$, we know that 
	\begin{equation} 
		\lim_{m\to\infty}
		|\kernel(t,s)(\vm_s-v_s)| = 0.
	\end{equation}
	Since $\kernel$ is integrable, from the dominated convergence theorem~\cite{stein2009real}, it follows that 
	\begin{equation} 
		\lim_{m\to\infty}
		\int_{\Rbb_+\times\Rbb_+}
		|\kernel(t,s)(\vm_s-v_s)|
		\drm s \drm t = 0.
	\end{equation}
	Therefore, due to \eqref{eqn:|l_mn-l_n|}, there exists $M_{\varepsilon}$ such that, for any $m\ge M_{\varepsilon}$, we have
	$|l_{n,m}-l_n|\le \frac18\varepsilon^2$.
	Accordingly, from triangle inequality, we have $|l_{m,n}-l|\le \frac14\varepsilon^2$, for any $m,n\ge K_{\varepsilon}:= \max\{M_{\varepsilon},N_{\varepsilon}\}$. 
	Subsequently, it follows that  
	\begin{equation} 
		\begin{split}
			\|\vcf_n&-\vcf_m\|^2 
			=
			\inner{\vcf_n}{\vcf_n}_{\Hk} -2\inner{\vcf_n}{\vcf_m}_{\Hk} +\inner{\vcf_m}{\vcf_m}_{\Hk} \!\!\!\!\!\!\!\!\!
			\\&= 
			l_{n,n}-2l_{n,m}+l_{m,m}
			\\&\le 
			(l+\frac14\varepsilon^2)-2(l-\frac14\varepsilon^2)+(l+\frac14\varepsilon^2) = \varepsilon^2.
		\end{split}
	\end{equation}
	Hence, for any $m,n\ge K_{\varepsilon}$, we have $\|\vcf_n-\vcf_m\|_{\Hk} \le \varepsilon$.
	Therefore, $\{\vcf_n\}_{n=1}^{\infty}$ is a Cauchy sequence in $\Hk$, and there exists $\vcf=(f_s)_{s\in\Rbb_+}\in\Hk$ such that $\lim_{n\to\infty}\vcf_n=\vcf$.
	Accordingly, due to the reproducing property, for any $t\in\Rbb_+$, we have  
	\begin{equation}\label{eqn:ft_lim_fnt} 
		\lim_{n\to\infty}f_{n,t}=
		\lim_{n\to\infty}\inner{\kernel_t}{\vcf_n}_{\Hk}=
		\inner{\kernel_t}{\vcf}_{\Hk}=f_t.
	\end{equation}
	For any $n\in\Nbb$ and for almost all $s,t\in\Rbb_+$, we have  
	\begin{equation} 
		\Big| \kernel(t,s)\vn_s \Big|
		\le \|\vcv\|_{\infty} |\kernel(t,s)|,
	\end{equation}
	and 
	\begin{equation} 
		\lim_{n\to\infty} \kernel(t,s)\vn_s =
		\kernel(t,s) v_s.
	\end{equation}
	Accordingly, since $\kernel$ is integrable, from the dominated convergence theorem~\cite{stein2009real}, \eqref{eqn:f_n} and \eqref{eqn:ft_lim_fnt}, it follows that 
	\begin{equation}\label{eqn:ft_intkts} 
		\begin{split}
			f_t 
			&=
			\lim_{n\to\infty}
			\int_{\Rbb_+}\kernel(t,s)\vn_s\drm s
			\\&=
			\int_{\Rbb_+}\lim_{n\to\infty}\kernel(t,s)\vn_s\drm s
			\\&= 
			\int_{\Rbb_+}\kernel(t,s)v_s\drm s,
		\end{split}
	\end{equation} 
	i.e., we have 
	\begin{equation}\label{eqn:f_intk} 
		\vcf=\int_{\Rbb_+}\kernel(\cdot,s)v_s\drm s.	
	\end{equation}
	For almost all $s\in\Rbb_+$, we know that
	$\lim_{n\to\infty}g_s\vn_s = g_sv_s$.
	Moreover, one has that
	$|g_s\vn_s|\le \|\vcv\|_{\infty}|g_s|$, for each $n\in\Nbb$.
	Since $\vcg=(g_s)_{s\in\Rbb_+}\in\Hk$ and each element of $\Hk$ is integrable, 
	due to the dominated convergence theorem~\cite{stein2009real}, \eqref{eqn:int_gs_vns_inner_fn_g} and $\lim_{n\to\infty}\vcf_n=\vcf$,
	we have 
	\begin{equation} 
		\label{eqn:int_gs_vs_inner_fg}
		\begin{split}
			\int_{\Rbb_+}g_sv_s\drm s
			&
			=
			\lim_{n\to\infty}
			\int_{\Rbb_+}g_s\vn_s\drm s
			\\&
			=
			\lim_{n\to\infty}
			\inner{\vcf_n}{\vcg}_{\Hk} 
			\\&
			=
			\inner{\vcf}{\vcg}_{\Hk}. 
		\end{split}
	\end{equation}  
	Let $\phiu{\tau}=(\phiu{\tau,t})_{t\in\Rbb_+}$
	be defined such that for any $t\in\Rbb_+$, 
	we have  
	\begin{equation} 
		\phiu{\tau,t} =  \int_{\Rbb_+}\kernel(t,s)u_{\tau-s}\drm s,
	\end{equation}
	i.e., $\phiu{\tau}=\int_{\Rbb_+}\kernel(\cdot,s)u_{\tau-s}\drm s$.
	Due to \eqref{eqn:ft_intkts} and the fact that $v_s=u_{\tau-s}$, for $s\in\Rbb_+$, we know that $\phiu{\tau}=\vcf\in\Hk$.
	Accordingly, from \eqref{eqn:int_gs_vs_inner_fg}, we have 
	\begin{equation*} 
	\begin{split}
		\Lu{\tau}(\vcg)
		&=
		\int_{\Rbb_+}
		v_s u_{\tau-s} \drm s
		\\&=
		\Big\langle \int_{\Rbb_+}
			\kernel(\cdot,s)u_{\tau-s}\drm s,\vcg\Big\rangle_{\Hk} 
		\\ &=
		\inner{\phiu{\tau}}{\vcg}_{\Hk},
	\end{split}
	\end{equation*}
	which implies that $\Lu{\tau}$ is a continuous (bounded) operator on $\Hk$. This concludes the proof for the case of $\Tbb=\Rbb_+$. 
	
	\textbf{Case II:} Let $\Tbb=\Zbb_+$ and, similarly to the previous case, define $\vcv=(v_s)_{s\in\Zbb_+}$ as $v_s=u_{\tau-s}$, for any $s\in\Zbb_+$. One can easily see that $\|\vcv\|_{\infty}=\|\vcu\|_{\infty}$. 
	For any $\vcg=(g_s)_{s\in\Zbb_+}\in\Hk$, we know that $\sum_{s\in\Zbb_+}|g_s|<\infty$, which implies that $\Lu{\tau}(\vcg)=\sum_{s\in\Zbb_+}g_sv_s$ is absolutely convergent due to $\|\vcv\|_{\infty}=\|\vcu\|_{\infty}<\infty$.
	Let $\varepsilon$ be an arbitrary positive real scalar.
	Since $\kernel$ is summable, there exists $N_{\varepsilon}\in\Nbb$ such that  
	\begin{equation} 
		\sum_{s,t\ge N_{\varepsilon}+1}
		|\kernel(t,s)|\le \frac{1}{\|\vcv\|_{\infty}^2}
		\varepsilon^2.  
	\end{equation}
	For any $n\in\Nbb$, let  $\vcf_n=(f_{n,t})_{t\in\Zbb_+}$ be defined as
	\begin{equation}
		f_{n,t}=\sum_{s=0}^n\kernel(t,s)v_s,
		\qquad \forall t\in \Zbb_+. 	
	\end{equation}
	One can see that $\vcf_n\in\Hk$. Let $n,m\in\Nbb$ such that $n,m\ge N_{\varepsilon}$.
	Without loss of generality, assume  $n\ge m$. Due to the reproducing property, we have 
	\begin{equation} 
		\begin{split}
			\|\vcf_n-\vcf_m\|_{\Hk}^2
			& =
			\Big\|\sum_{s=n+1}^m\kernel(\cdot,s)v_s\Big\|_{\Hk}^2
			\\&=
			\sum_{s,t\ge n+1}^m\kernel(s,t)v_s v_t
			\\&\le 
			\|\vcv\|_{\infty}^2\sum_{s,t\ge N_{\varepsilon}+1}|\kernel(s,t)|
			\\&\le 
			\varepsilon^2,
		\end{split}
	\end{equation}
	i.e., $\|\vcf_n-\vcf_m\|_{\Hk}\le \varepsilon$.
	Therefore $\{\vcf_n\}_{n=1}^{\infty}$ is a Cauchy sequence in $\Hk$, and there exists $\vcf=(f_t)_{t\in\Zbb_+}$ such that $\lim_{n\to\infty}\vcf_n=\vcf$.
	Note that, we have  
	\begin{equation} 
		f_t=
		\inner{\kernel_t}{\vcf}_{\Hk}=
		\lim_{n\to\infty}
		\inner{\kernel_t}{\vcf_n}_{\Hk}=
		\lim_{n\to\infty}f_{n,t},
	\end{equation}
	for any $t\in\Zbb_+$.
	Accordingly, from the reproducing property, one can see that 
	\begin{equation} 
		\begin{split}
			f_t
			&=
			\lim_{n\to\infty}
			\Big\langle \kernel_t,\sum_{s=0}^n\kernel(\cdot,s)v_s\Big\rangle_{\Hk} 
			\\&=
			\lim_{n\to\infty}
			\sum_{s=0}^n\kernel(t,s)v_s 
			\\&= 
			\sum_{s=0}^{\infty}\kernel(t,s)v_s,
		\end{split}
	\end{equation}
	where the last equality is due to  $\sum_{s=0}^{\infty}|\kernel(t,s)v_s|<\infty$, for any $t\in\Zbb_+$.
	Hence, we have $\vcf=\sum_{s=0}^{\infty}\kernel(\cdot,s)v_s$.
	For any $\vcg=(g_s)_{s\in\Zbb_+}\in\Hk$, we know that $\sum_{s\in\Zbb_+}|g_s|\le \infty$, which implies that $\Lu{\tau}(\vcg)=\sum_{s\in\Zbb_+}g_sv_s$ is absolutely convergent due to $\|\vcv\|_{\infty}=\|\vcu\|_{\infty}<\infty$.
	Therefore, one can see that 
	\begin{equation} 
		\begin{split}
			\sum_{s\in\Zbb_+}g_sv_s
			&=
			\lim_{n\to\infty}
			\sum_{s=0}^nv_s\inner{\kernel_s}{\vcg}_{\Hk}
			\\&=
			\lim_{n\to\infty}
			\Big\langle \sum_{s=0}^n\kernel(\cdot,s)v_s,\vcg\Big\rangle_{\Hk} 
			\\&=
			\lim_{n\to\infty}
			\inner{\vcf_n}{\vcg}_{\Hk} \\ &= \inner{\vcf}{\vcg}_{\Hk}.
		\end{split}
	\end{equation}
	Let $\phiu{\tau}=(\phiu{\tau,t})_{t\in\Zbb_+}$ be defined as 
	\begin{equation} 
		\phiu{\tau,t}=f_t=\sum_{s\in\Zbb_+}\kernel(t,s)v_s,	
	\end{equation}
	for any $t\in\Zbb_+$.
	Accordingly, we have 
	\begin{equation*} 
		\Lu{\tau}(\vcg) 
		\!=\! \sum_{s\in\Zbb_+}g_sv_s
		\!=\!
		\langle\phiu{\tau},\vcg\rangle_{\Hk} 
		\!=\! 
		\Big\langle\sum_{s\in\Zbb_+}\kernel(\cdot,s)v_s,\vcg\Big\rangle_{\Hk}, 
	\end{equation*}
	for any $\vcg\in\Hk$.
	This concludes the proof.
\end{proof} 
From this theorem, we have the following corollary. 
\begin{corollary}\label{thm:KRI_existence_uniquenes_solution}
	Let $\kernel$ be an integrable kernel and $\vcu\in\Lscrinfty$. 
	Then, the kernel-based impulse response estimation problem \eqref{eqn:kernel_based_IR_identification} admits a unique solution introduced in \eqref{eqn:solution_KRI}.
\end{corollary}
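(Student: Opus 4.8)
The plan is to combine the continuity result of Theorem~\ref{thm:Lu_bounded} with the representer theorem (Theorem~\ref{thm:kernel_based_IR_identification}) to obtain existence together with the closed form, and then to invoke strong convexity to obtain uniqueness. First I would observe that, since $\kernel$ is an integrable Mercer kernel and $\vcu\in\Lscrinfty$, Theorem~\ref{thm:Lu_bounded} guarantees that each convolution operator $\Lu{t_i}:\Hk\to\Rbb$, $i=1,\ldots,\nD$, is continuous (bounded). This is precisely the hypothesis required by Theorem~\ref{thm:kernel_based_IR_identification}; hence the representer theorem applies verbatim and exhibits the $\gstar$ of \eqref{eqn:solution_KRI} as a minimizer of \eqref{eqn:kernel_based_IR_identification}, establishing both existence and the asserted form.

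For uniqueness I would argue that the objective $J(\vcg):=\sum_{i=1}^{\nD}(\Lu{t_i}(\vcg)-y_i)^2+\lambda\|\vcg\|_{\Hk}^2$ is strongly convex on $\Hk$. Because each $\Lu{t_i}$ is a continuous linear functional, the map $\vcg\mapsto\Lu{t_i}(\vcg)-y_i$ is affine, so every summand $(\Lu{t_i}(\vcg)-y_i)^2$ is convex and therefore the data-fit term is convex. The regularizer $\lambda\|\vcg\|_{\Hk}^2$ is $2\lambda$-strongly convex, a direct consequence of the parallelogram identity in the Hilbert space $\Hk$, and the sum of a convex functional with a $2\lambda$-strongly convex functional is again $2\lambda$-strongly convex. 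Hence $J$ is strongly convex.

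Finally I would use the standard fact that a strongly convex functional admits at most one minimizer: if $\vcg_1\neq\vcg_2$ were two minimizers sharing the minimal value $J^\star$, then evaluating $J$ at the midpoint and using $2\lambda$-strong convexity yields $J(\tfrac12(\vcg_1+\vcg_2))\le J^\star-\tfrac{\lambda}{4}\|\vcg_1-\vcg_2\|_{\Hk}^2<J^\star$, a contradiction. Combining this uniqueness with the existence and explicit form supplied by the representer theorem shows that $\gstar$ in \eqref{eqn:solution_KRI} is the unique solution of \eqref{eqn:kernel_based_IR_identification}. The only substantive ingredient is the operator continuity furnished by Theorem~\ref{thm:Lu_bounded}; everything afterward is routine convex analysis, so I expect no genuine obstacle beyond ensuring that $J$ is finite-valued on $\Hk$, which the continuity of the $\Lu{t_i}$ immediately guarantees.
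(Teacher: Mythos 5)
Your argument is correct and follows the same skeleton as the paper's proof: Theorem~\ref{thm:Lu_bounded} supplies the continuity of the $\Lu{t_i}$, strong convexity of the objective gives uniqueness, and the representer theorem gives the closed form \eqref{eqn:solution_KRI}. The one substantive difference is where existence comes from. You obtain existence by reading it off Theorem~\ref{thm:kernel_based_IR_identification} (whose statement asserts that the minimizer \emph{is} $\gstar$), and you use strong convexity only to rule out a second minimizer. The paper instead writes the objective as $\Jcal(\vcg)=\sum_{i=1}^{\nD}\big(\inner{\phiu{t_i}}{\vcg}_{\Hk}-y_{t_i}\big)^2+\lambda\|\vcg\|_{\Hk}^2$, notes that it is proper, continuous, and strongly convex, and invokes a general convex-analysis result (Theorem 2.19 of Peypouquet) that delivers existence \emph{and} uniqueness simultaneously; the representer theorem is then used only to identify the minimizer with \eqref{eqn:solution_KRI}. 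The paper's route is the more self-contained one: on an infinite-dimensional Hilbert space the existence of a minimizer is not automatic (one needs coercivity plus weak lower semicontinuity, or the Cauchy-minimizing-sequence argument that strong convexity provides), and deriving existence from a theorem whose conclusion already presupposes it is slightly circular in presentation, even though it is logically admissible given how Theorem~\ref{thm:kernel_based_IR_identification} is stated. Your midpoint inequality $J(\tfrac12(\vcg_1+\vcg_2))\le J^\star-\tfrac{\lambda}{4}\|\vcg_1-\vcg_2\|_{\Hk}^2$ is the correct quantitative form for a $2\lambda$-strongly convex functional, so the uniqueness step is sound; if you wanted to match the paper's level of rigor you would only need to add that any minimizing sequence of a continuous strongly convex functional on $\Hk$ is Cauchy, which yields existence without appealing to Theorem~\ref{thm:kernel_based_IR_identification} at all.
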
 
\begin{proof}
From Theorem~\ref{thm:Lu_bounded}, it follows that the objective in \eqref{eqn:kernel_based_IR_identification} is function $\Jcal:\Hk\to\Rbb$ defined as  
\begin{equation} 
\Jcal(\vcg)=\sum_{i=1}^{\nD}\big(\inner{\phiu{t_i}}{\vcg}_{\Hk}-y_{t_i}\big)^2 + \lambda \|\vcg\|_{\Hk}^2,
\end{equation}	
for any $\vcg\in\Hk$.
This implies that  $\Jcal$ is a quadratic continuous function. Since $\lambda>0$, we know that $\Jcal$ is strongly convex.
Accordingly, from  $\Jcal(\zero)=\|\vcy\|^2$, it follows that $\Jcal$ is a proper continuous strongly convex function.
Therefore, due to \cite[Theorem 2.19]{peypouquet2015convex}, we know that $\min_{\vcg\in\Hk}\Jcal(\vcg)$ has a unique solution, which implies the existence and uniqueness for the solution of \eqref{eqn:kernel_based_IR_identification}.
The proof concludes from \eqref{eqn:phiu_tau_t} and \cite[Theorem 1.3.1]{wahba1990spline}.
\end{proof}

\section{Conclusion}\label{sec:conclusion}  
The kernel-based system identification stands on the central assumption that the convolution operators restricted to the chosen RKHS are continuous linear functionals. Current research work  in the literature assumes, implicitly or explicitly, that this continuity property holds without elaborating the required conditions. In this work, we have addressed this long-standing question by specifying these conditions: the boundedness of the input signal and the integrability of the kernel function. Furthermore, owing to the strong convexity of the optimization problem and the resulted continuity feature of the convolution operators, we have shown that the kernel-based approach is well-defined by guaranteeing the existence and uniqueness properties for the solution of the identification problem.

\appendix
\section{Appendix}   
\label{sec:Appendix}
\subsection{Proof of Lemma \ref{lem:int_kernel_I}}\label{apn:proof:lem:int_kernel_I}  
First we show the claims when $\otau<\infty$, and then, extend the result to the general case.  

Let $\Dtau$ be defined as $\Dtau=\otau-\utau$. For each $n\in\Nbb$, define $N_n$ and $\Delta_n$  respectively as $N_n:=2^n$ and $\Delta_n:=2^{-n}\Dtau$. Also, let function $\vcf_n:=(f_{n,s})_{s\in\Rbb_+}\in\Hk$ be defined as  
\begin{equation}\label{eqn:def_fn}  
	f_{n,s}=\Delta_n\sum_{i=1}^{N_n}\kernel(\tin,s), \qquad \forall s\in\Rbb_+,
\end{equation}
where $\tin = \utau+(i-1)\Dtau$, for  $i=1,\ldots,2^n$.
Let $\varepsilon$ be an arbitrary positive real scalar.
Since $\kernel$ is continuous, we know that it is uniformly continuous on compact region $[\utau,\otau]\times[\utau,\otau]$. Therefore, there exists positive real scalar $\delta$ such that for any $(s_1,t_1),(s_2,t_2)\in [\utau,\otau]\times[\utau,\otau]$ where $|s_1-s_2|+|t_1-t_2|\le \delta$, we have   
\begin{equation}\label{eqn:k_s1t1_k_s2t2}  
	|\kernel(s_1,t_1)-\kernel(s_2,t_2)|\le \frac{1}{4}\frac{\varepsilon^2}{\Dtau^2}.
\end{equation} 
Let ${\delta}_{\varepsilon}$ be the largest scalar in $(0,1)$ with such property and define 
$N_{\varepsilon}$ as the smallest integer such that  
\begin{equation}  
	N_{\varepsilon}\ge \max(-\log_2(\frac{{\delta}_{\varepsilon}}{\Dtau}),0)+1.
\end{equation}
Consider arbitrary integers $n$ and $m$ such that $n,m\ge N_{\varepsilon}$.
From the reproducing property of the kernel,  one can see that  
\begin{equation}\label{eqn:inner_fn_fm}  
	\begin{split}\!\!\!\!\!\!
		\inner{\vcf_n}{\vcf_m}_{\Hk} &\!=\!
		\Big\langle\Delta_n\sum_{i=1}^{N_n}\kernel(\tin,\cdot),
		\Delta_m\sum_{j=1}^{N_m}\kernel(\tjm,\cdot)\Big\rangle_{\Hk}\!\!\!\!\!\!\!\!\!
		\\&\!=\!
		\Delta_n\Delta_m\sum_{i=1}^{N_n}\sum_{j=1}^{N_m}\kernel(\tin,\tjm).
	\end{split}
\end{equation}
Define $\Iijnm$ as region $[\tin,\tin+\Delta_n)\times[\tjm,\tjm+\Delta_m)$, for $i=1,\ldots,2^n$ and $j=1,\ldots,2^m$. Also, let $I$ be the value defined as   
\begin{equation}  
	I:=\int_{[\utau,\otau]\times[\utau,\otau]}\kernel(s,t)\drm s\drm t.
\end{equation}
Note that $I$ is a well-defined integral due to integrability of $\kernel$.
From \eqref{eqn:inner_fn_fm} and the triangle inequality, we have  
\begin{equation*}  
	\begin{split}
		|\langle\vcf_n,&\vcf_m\rangle_{\Hk}\!-\! I| 
		\\&=
		\bigg|\!\sum_{i=1}^{N_n}\sum_{j=1}^{N_m}
		\int_{\Iijnm}
		\!\! \kernel(\tin,\tjm)-\kernel(s,t)\drm s \drm t
		\bigg|
		\\ &\le 
		\sum_{i=1}^{N_n}\sum_{j=1}^{N_m}
		\int_{\Iijnm}
		\big|\kernel(\tin,\tjm)-\kernel(s,t)\big|
		\drm s \drm t
		\\ &\le 
		\sum_{i=1}^{N_n}\sum_{j=1}^{N_m}
		\frac{1}{4}\frac{\varepsilon^2}{\Dtau^2} \Delta_n\Delta_m
		\\ &= 
		\frac{1}{4}\varepsilon^2,
	\end{split}
\end{equation*}
where the inequality is due to \eqref{eqn:k_s1t1_k_s2t2}.
Subsequently, one can see that  
\begin{equation}  
	\label{eqn:I_inner_fn_fm_eps}
	I-\frac{1}{4}\varepsilon^2
	\le
	\inner{\vcf_n}{\vcf_m}
	\le
	I + \frac{1}{4}\varepsilon^2.
\end{equation} 
From \eqref{eqn:I_inner_fn_fm_eps}, it follows that  
\begin{equation*}  
	\begin{split}
		\|\vcf_n-\vcf_m&\|_{\Hk}^2 
		=
		\inner{\vcf_n}{\vcf_n}_{\Hk}
		-2\inner{\vcf_n}{\vcf_m}_{\Hk} + \inner{\vcf_m}{\vcf_m}_{\Hk}
		\\&\le
		(I + \frac{1}{4}\varepsilon^2)
		-2(I-\frac{1}{4}\varepsilon^2) +
		(I + \frac{1}{4} \varepsilon^2) 
		\\ &= 
		\varepsilon^2,
	\end{split}
\end{equation*} 
and, hence, we have $\|\vcf_n-\vcf_m\|_{\Hk}\le \varepsilon$.
Therefore,  $\{\vcf_n\}_{n=1}^{\infty}$ is a Cauchy sequence in $\Hk$, and there exists $\vcf=(f_s)_{s\in\Rbb_+}\in\Hk$ such that $\lim_{n\to\infty}\|\vcf_n-\vcf\|_{\Hk}=0$. 
For any $s\in\Rbb_+$, due to the Cauchy-Schwartz inequality and the reproducing property, we have  
\begin{equation}\label{eqn:fn_s_f_s_le} 
	|f_{n,s}-f_s| = |\inner{\kernel_s}{\vcf_n-\vcf}_{\Hk}|
	\le \kernel(s,s)^{\frac12}\|\vcf_n-\vcf\|_{\Hk},
\end{equation}
which implies that 
$\lim_{n\to\infty}f_{n,s}=f_s$.
On the other hand, from \eqref{eqn:k_s1t1_k_s2t2}, one can see that  
\begin{equation*}  
	\begin{split}
		\bigg|f_{n,s}\!\!\ -\int_{[\utau,\otau]}\!\!\kernel(s,t)\drm t\bigg| 
		&=
		\bigg|\Delta_n\sum_{i=1}^{N_n}\kernel(\tin,s)
		-\! \int_{\utau}^{\otau}\!\!\kernel(s,t)\drm t
		\bigg|
		\\&=
		\bigg|\sum_{i=1}^{N_n}
		\int_{\Iin}
		\!\! \kernel(\tin,s)-\kernel(s,t) \drm t
		\bigg|
		\\ &\le 
		\sum_{i=1}^{N_n}
		\int_{\Iin}\big|
		\kernel(\tin,s)-\kernel(s,t)\big| \drm t
		\\&\le 
		\sum_{i=1}^{N_n}
		\frac{1}{4}\frac{\varepsilon^2}{\Dtau^2} \Delta_n
		\\ &= 
		\frac{1}{4}\frac{\varepsilon^2}{\Dtau} ,
	\end{split}
\end{equation*}
where, for $i=1,\ldots,N_n$, interval $\Iin$ is defined as $[\tin,\tin+\Delta_n)$.
Accordingly, we have  
\begin{equation}  
	f_s=\limOp_{n\to\infty}f_{n,s} = \int_{[\utau,\otau]}\kernel(s,t)\drm t,
\end{equation}
which says that $\vcf=\int_{[\utau,\otau]}\kernel(\cdot,t)\drm t \in \Hk$.
Moreover, from $\lim_{n\to\infty}\vcf_n=\vcf$, we know that  
\begin{equation}  
	\|\vcf\|_{\Hk}^2
	=
	\limOp_{n\to\infty}\|\vcf_n\|_{\Hk}^2
	=
	\limOp_{n\to\infty}\inner{\vcf_n}{\vcf_n}_{\Hk}.
\end{equation} 
Therefore, from \eqref{eqn:I_inner_fn_fm_eps} and the definition of $\vcf$ and $I$, it follows that  
\begin{equation}  
	\begin{split}
		\Big\|\int_{[\utau,\otau]}\kernel(\cdot,t)\drm t\Big\|_{\Hk}^2
		&= \limOp_{n\to\infty}\inner{\vcf_n}{\vcf_n}_{\Hk}
		\\&
		= \int_{[\utau,\otau]\times[\utau,\otau]}\kernel(s,t)\drm s\drm t.
	\end{split}
\end{equation}
Note that, for any $s$ such that $t+s\in\Rbb_+$, due to the reproducing property of the kernel, 
we have  
\begin{equation*}  
	\begin{split}
		\|\kernel_{t+s}-\kernel_t\|_{\Hk}^2
		=&\ \kernel(t+s,t+s)\\&-\kernel(t,t+s)-\kernel(t+s,t)+\kernel(t,t),
	\end{split}
\end{equation*}
which implies that $\lim_{s\to 0}\|\kernel_{t+s}-\kernel_t\|_{\Hk}=0$ due to continuity of the kernel.
Meanwhile, for each $\vcg=(g_t)_{t\in\Rbb_+}\in\Hk$, 
from the Cauchy-Schwartz inequality and the reproducing property,
one has  
\begin{equation*}  
\begin{split}
	|g_{t+s}-g_t| 
	&= 
	|\inner{\kernel_{t+s}-\kernel_t}{\vcg}_{\Hk}| 
	\\&\le 
	\|\kernel_{t+s}-\kernel_t\|_{\Hk} \|\vcg\|_{\Hk}.
\end{split}
\end{equation*}
Accordingly, we have $\lim_{s\to 0}|g_{t+s}-g_t|=0$, which says that $\vcg=(g_t)_{t\in\Rbb_+}$ is a continuous function of $t$.
Hence, the Riemann integral of $\vcg$ exists, and we have  
\begin{equation*}  
	\begin{split}
		\int_{[\utau,\otau]}  g_t\drm t 
		&= 
		\limOp_{n\to\infty}\sum_{i=1}^{2^n}g\big(\utau+(i-1)2^{-n}\Dtau\big)2^{-n}\Dtau
		\\
		&= 
		\limOp_{n\to\infty}\sum_{i=1}^{N_n}g(\tin)\Delta_n 
		\\
		&=  \limOp_{n\to\infty}\inner{\vcf_n}{\vcg}_{\Hk},
	\end{split}
\end{equation*}
where the last equality is due the definition of $\vcf_n$ in \eqref{eqn:def_fn} and the reproducing property.
Therefore, from $\lim_{n\to\infty}\vcf_n = \vcf$, one can see that  
\begin{equation}  
	\begin{split}
		\int_{[\utau,\otau]}  g_t\drm t &=
		\limOp_{n\to\infty}\inner{\vcf_n}{\vcg}_{\Hk}
		\\&=
		\inner{\vcf}{\vcg}_{\Hk}
		\\&= \Big\langle\int_{[\utau,\otau]}\kernel(\cdot,t)\drm t,\vcg\Big\rangle_{\Hk}.
	\end{split}
\end{equation}
Now, we consider the case where $\otau=\infty$. 
For each integer $n\ge \tau$, let  
$\vch_n =(h_{n,s})_{s\in\Rbb_+}$ be function
$\vch_n=\int_{[\utau,n]}\kernel(\cdot,t)\drm t$ which is well-defined and belongs to $\Hk$.
Let $\varepsilon$ be an arbitrary positive real scalar.
Since $\kernel$ is absolutely integrable, we know that   
\begin{equation}  
	\lim_{\tau\to\infty}\int_{\tau}^\infty\int_{\tau}^\infty|\kernel(s,t)|\drm s \drm t = 0.
\end{equation}
Let $\tau_{\varepsilon}\ge \utau$ be the smallest positive real scalar such that   
\begin{equation}\label{eqn:int_kernel_tau_eps}  
	\int_{\tau_{\varepsilon}}^\infty\int_{\tau_{\varepsilon}}^\infty|\kernel(s,t)|\drm s \drm t \le \varepsilon^2,
\end{equation}
and $n,m\in\Nbb$ be arbitrary indices such that $n,m\ge \tau_{\varepsilon}$.
Without loss of generality, assume $n\ge m$.
Then, due to the discussion above and the triangle inequality, we have  
\begin{equation}  
	\begin{split}\!\!
		\|\vch_n&-\vch_m\|_{\Hk}^2 = 
		\Big\|
		\int_m^n
		\kernel(\cdot,t)\drm t
		\Big\|_{\Hk}^2	
		\\&=
		\int_m^n
		\int_m^n
		\kernel(s,t)\drm s\drm t
		\\&\le 
		\int_m^n
		\int_m^n
		|\kernel(s,t)|\drm s\drm t
		\\&\le 
		\int_{\tau_{\varepsilon}}^\infty
		\int_{\tau_{\varepsilon}}^\infty
		|\kernel(s,t)|\drm s\drm t.\!\!\!\!\!\!\!\!\!\!\!\!
	\end{split}
\end{equation}  
Accordingly, from \eqref{eqn:int_kernel_tau_eps}, we know that $\|\vch_n-\vch_m\|_{\Hk}\le \varepsilon$, which implies that $\{\vch_n\}_{n\in\Nbb,n\ge \utau}$ is a Cauchy sequence in $\Hk$.  Therefore, there exists $\vch=(h_s)_{s\in\Rbb_+}\in\Hk$ such that $\lim_{n\to\infty}\vch_n=\vch$.
Based on an argument similar to \eqref{eqn:fn_s_f_s_le}, one can show that $\lim_{n\to\infty}h_{n,s}=h_s$, for any $s\in\Rbb_+$.  
Since $\kernel(s,\cdot)\in\Hk$ and the elements of $\Hk$ are integrable, due to the dominated convergence theorem~\cite{stein2009real}, we have  
\begin{equation}  
	\begin{split}
		h_s 
		& = \lim_{n\to\infty}h_n(s)
		\\&= \lim_{n\to\infty}\int_{\utau}^n\kernel(s,t)\drm t
		\\
		& = \lim_{n\to\infty}\int_{\utau}^{\infty}\kernel(s,t)\mathbf{1}_{[{\utau},n]}(t)\drm t
		\\&= \int_{\utau}^\infty\kernel(s,t)\drm t.\!\!\!\!\!\!
	\end{split}
\end{equation}  
In other words, one has $\vch =\int_{\utau}^\infty\kernel(\cdot,t)\drm t$.
Hence, from $\lim_{n\to\infty}\vch_n=\vch$ and the above discussion, we have  
\begin{equation}  
	\begin{split}\!\!\!\!
		\Big\| \!&\int_{\utau}^\infty\!\kernel(\cdot,t)\drm t\Big\|_{\Hk}^2 
		= \lim_{n\to\infty}\|\vch_n\|_{\Hk}^2
		\\& 
		= \!\lim_{n\to\infty}\int_{\utau}^n \!\!\!\int_{\utau}^n \! \kernel(s,t)\drm s\drm t
		\\& 
		= \!\lim_{n\to\infty}\int_{\utau}^\infty \!\!\!\int_{\utau}^\infty \!\! \kernel(s,t)\mathbf{1}_{[{\utau},n]^2}(s,t)\drm s\drm t
		\\&= \!\int_{\utau}^\infty\!\!\!\!\int_{\utau}^\infty \!\!\!\!\kernel(s,t)\drm s\drm t,\!\!\!\!\!\!\!\!\!\!\!\!
	\end{split}
\end{equation}  
where the last equality is according to the dominated convergence theorem.
Let $\vcg=(g_t)_{t\in\Rbb_+}$ be an arbitrary element of $\Hk$.
Based on same arguments as before, one can see that  
\begin{equation}  
	\begin{split}
		\Big\langle\int_{\utau}^\infty\kernel(\cdot,t)\drm t,&\vcg\Big\rangle_{\Hk}
		=
		\lim_{n\to\infty}\inner{\vch_n}{\vcg}_{\Hk}
		\\=& \lim_{n\to\infty} \Big\langle \int_{\utau}^n\kernel(\cdot,t)\drm t,\vcg\Big\rangle_{\Hk}
		\\=& \lim_{n\to\infty} \int_{\utau}^ng_t\drm t
		\\=& \lim_{n\to\infty} \int_{\utau}^\infty g_t\mathbf{1}_{[{\utau},n]}(t)\drm t 
		\\=& \int_{\utau}^\infty g_t\drm t,\!\!\!\!\!\!\!\!\!\!\!\!
	\end{split}
\end{equation}  
where the last equality is due to the dominated convergence theorem and the fact that $\vcg$ is integrable.   
\qed  
\subsection{Proof of Corollary \ref{cor:inner_int_I1_int_I2}}\label{apn:proof:cor:inner_int_I1_int_I2}  
In \eqref{eqn:int_g}, set $\utau$, $\otau$, and $\vcg$ respectively to $\utau_1$, $\otau_1$, $\vcg= \int_{[\utau_2,\otau_2]}
\kernel(\cdot,s)\drm s$. 
Accordingly, we have we 
\begin{equation*}
	\begin{split}
		\Big\langle\int_{\utau_1}^{\otau_1}\!\!\kernel(\cdot,t)\drm t,\!\int_{\utau_2}^{\otau_2}\!
		&\kernel(\cdot,s)\drm s\Big\rangle_{\Hk} 
		\\&
		=	
		\int_{\utau_1}^{\otau_1}
		\int_{\utau_2}^{\otau_2}
		\!\!\kernel(t,s)\drm s\drm t.
	\end{split}
\end{equation*}
In this equation, since the kernel is integrable, the right-hand side is well-defined. Moreover, due to the Fubini theorem~\cite{stein2009real}, it equals to the other integrals in \eqref{eqn:inner_int_I1_int_I2}. 
\qed

\subsection{Proof of Lemma \ref{lem:sum_kernel_I}}\label{apn:proof:lem:sum_kernel_I}
We know that $\kernel(\cdot,t)\in\Hk$, for each $t=\utau,\ldots,\otau$.  
If $\otau$ is finite, one can see that $\sum_{\utau\le t\le\otau}\kernel(\cdot,t)$ belongs to $\Hk$, hence, it is well-defined.
Moreover, using the definition of norm and the reproducing property, one can show \eqref{eqn:sum_kt}. Similarly, \eqref{eqn:sum_gt} is concluded from the reproducing property.
Now, we consider the case $\otau=\infty$.
For $n\ge \utau$, we define $\vcf_n\in\Hk$ as
$$\vcf_n=(f_{n,s})_{s\in\Zbb_+}:=\sum_{\utau\le t\le n}\kernel(\cdot,t).$$ 
Let $\varepsilon$ be an arbitrary positive real scalar, and  $N_{\varepsilon}$ be the smallest non-negative integer such that 
$$\sum_{s,t\ge N_{\varepsilon}}|\kernel(s,t)|\le \varepsilon.$$
Note that since $\kernel$ is integrable, there exist such $N_{\varepsilon}$ for any positive $\varepsilon$.
Now, let $n,m\in\Nbb$ such that $n,m\ge N_{\varepsilon}$ and without loss of generality, we assume $n\ge m$. 
Based on the previous case, we know that  
\begin{equation*}  
	\begin{split}
		\|\vcf_n-\vcf_m\|_{\Hk}^2 &= 
		\Big\|
		\sum_{t=m+1}^{n}\kernel(\cdot,t)
		\Big\|_{\Hk}^2
		\\&
		=
		\sum_{m+1\le s,t \le n}\kernel(s,t)
		\\&
		\le 
		\sum_{N_{\varepsilon}\le s,t }|\kernel(s,t)| \\&\le \varepsilon^2.
	\end{split}
\end{equation*}
Accordingly, we have  $\|\vcf_n-\vcf_m\|_{\Hk}\le \varepsilon$, which implies that $\{\vcf_n\}_{n\ge \utau}$ is a Cauchy sequence and hence convergent.
Let $\vcf=(f_s)_{s\in\Zbb_+}$ denote the limit of this sequence.
For any $s\in\Zbb_+$, we know that  
\begin{equation}  
	\begin{split}
	|f_s-f_{n,s}| 
	&=
	|\inner{\vcf-\vcf_n}{\kernel_s}_{\Hk}|
	\\&\le 
	\kernel(s,s)^{\frac12}\|\vcf-\vcf_n\|_{\Hk},
	\end{split}
\end{equation}
and consequently, we have $\lim_{n\to\infty}f_{n,s}=f_s$.
Subsequently, since $\kernel(s,\cdot)$ is absolutely integrable,  it follows that  
\begin{equation}  
	f_s = \lim_{n\to\infty}\sum_{\utau\le t\le n}\kernel(s,t) = \sum_{\utau\le t}\kernel(s,t),
\end{equation}
i.e., $\vcf=\sum_{\utau\le t}\kernel(\cdot,t)$.
Moreover, we have
\begin{equation}  
	\begin{split}\!\!  
		\Big\| \sum_{\utau\le t}\kernel(\cdot,t)\Big\|_{\Hk}^2 
		&= 
		\lim_{n\to\infty}\|\vcf_n\|_{\Hk}^2
		\\&= 
		\lim_{n\to\infty}\sum_{\utau\le s,t\le n}\kernel(s,t)
		\\&= 
		\sum_{\utau\le s,t}\kernel(s,t),
	\end{split}
\end{equation}  
where the last equality is due to the dominated convergence theorem~\cite{stein2009real} and being $\kernel$ integrable.
For any $\vcg=(g_t)_{t\in\Zbb_+}\in\Hk$, we know that $\vcg$ is integrable, i.e., $\sum_{t\in\Zbb_+}|g_t|<\infty$.
Therefore, 
from $\lim_{n\to\infty}\vcf_n = \vcf$, 
we have  
\begin{equation}  
	\begin{split}
		\sum_{\tau\le t}g_t 
		& 
		= \lim_{n\to\infty}\sum_{\tau\le t\le n}g_t 
		\\&
		= \lim_{n\to\infty}\Big\langle\sum_{\tau\le t\le n}\kernel(\cdot,t),\vcg\Big\rangle_{\Hk}
		\\&
		= \lim_{n\to\infty}\inner{\vcf_n}{\vcg}_{\Hk}
		\\&
		= \inner{\vcf}{\vcg}_{\Hk}
			\\&
		= \Big\langle\sum_{\utau\le t}\kernel(\cdot,t),\vcg\Big\rangle_{\Hk}.
	\end{split}
\end{equation}
This concludes the proof.
\qed

\bibliographystyle{IEEEtran}        
{\footnotesize{\bibliography{mainbib}}}           
\end{document}